\newif\iflipics
\title{Lower Bounds for Function Inversion with Quantum Advice}
\newcommand{\kaiminfunding}{Kai-Min Chung is partially supported by the 2019 Academia Sinica Career Development Award under Grant no. 23-17, and MOST QC project under Grant no. MOST 108-2627-E-002-001-.}
\newcommand{\luowenfunding}{Luowen Qian is supported by the DARPA SIEVE program.}
  \author{Kai-Min Chung}{Academia Sinica}{kmchung@iis.sinica.edu.tw}{}{\kaiminfunding}
  \author{Tai-Ning Liao}{National Taiwan University}{b04901168@ntu.edu.tw}{}{}
  \author{Luowen Qian}{Boston University}{luowenq@bu.edu}{}{\luowenfunding}
  \authorrunning{K.M. Chung, T.N. Liao, and L. Qian}
  \keywords{Cryptanalysis, Data Structures, Quantum Query Complexity}
  \newcommand{\email}[1]{Email: \href{mailto:#1}{\texttt{#1}}}
  \author{
    Kai-Min Chung\footnote{Academia Sinica. \kaiminfunding\ \email{kmchung@iis.sinica.edu.tw}} \and
    Tai-Ning Liao\footnote{National Taiwan University. \email{b04901168@ntu.edu.tw}} \and
    Luowen Qian\footnote{Boston University. \luowenfunding\ \email{luowenq@bu.edu}}
  }
  \date{}
  \newtheorem{theorem}{Theorem}
  \newtheorem{definition}{Definition}
  \newtheorem{lemma}{Lemma}
  \newtheorem{claim}{Claim}
  \newtheorem{corollary}{Corollary}
\newtheorem{fact}{Fact}
\def \ws[#1]{\hspace{#1 pt}}
\def \vs[#1]{\vspace{#1 pt}}
\def \Enc {\textnormal{Enc}}
\def \Dec {\textnormal{Dec}}
\def \eps {\varepsilon}
\def \R {\mathcal{R}}
\def \Complex {\mathbb{C}}
\newcommand{\Prob}[1]{\underset{#1}{\textnormal{Pr}}}
\newcommand{\Epc}[1]{\underset{#1}{\mathbb{E}}}
\newcommand{\ket}[1]{|#1\rangle}
\newcommand\Item[1][]{%
  \ifx\relax#1\relax  \item \else \item[#1] \fi
  \abovedisplayskip=0pt\abovedisplayshortskip=0pt~\vspace*{-\baselineskip}}
\begin{document}
\iflipics
  \nolinenumbers
\fi
\maketitle

\begin{abstract}
Function inversion is the problem that given a random function $f: [M] \to [N]$, we want to find pre-image of any image $f^{-1}(y)$ in time $T$.
In this work, we revisit this problem under the preprocessing model where we can compute some auxiliary information or advice of size $S$ that only depends on $f$ but not on $y$.
It is a well-studied problem in the classical settings, however, it is not clear how quantum algorithms can solve this task any better besides invoking Grover's algorithm~\cite{grover1996fast}, which does not leverage the power of preprocessing.

Nayebi et al.~\cite{nayebi2015quantum} proved a lower bound $ST^2 \ge \tilde\Omega(N)$ for quantum algorithms inverting \textit{permutations}, however, they only consider algorithms with \textit{classical} advice.
Hhan et al.~\cite{cryptoeprint:2019:1093} subsequently extended this lower bound to fully quantum algorithms for inverting permutations.
In this work, we give the same asymptotic lower bound to fully quantum algorithms for inverting functions for fully quantum algorithms under the regime where $M = O(N)$.%, partially resolving the problem raised by Hhan et al.
%, and thus resolving the open question positively raised by Nayebi et al. of whether such lower bound is achievable for inverters with quantum advice.

%Our results also fits under the QROM-QAI model introduced by Hhan et al.~\cite{cryptoeprint:2019:1093} where they showed a similar bound for permutation inversion.
%Our result regarding function inversion also directly translates to proving that OWF is secure in QROM-QAI, which is left open by Hhan et al.

In order to prove these bounds, we generalize the notion of quantum random access code, originally introduced by Ambainis et al.~\cite{ambainis1999dense}, to the setting where we are given a list of (not necessarily independent) random variables, and we wish to compress them into a variable-length encoding such that we can retrieve a random element just using the encoding with high probability.
As our main technical contribution, we give a nearly tight lower bound (for a wide parameter range) for this generalized notion of quantum random access codes, which may be of independent interest.

\end{abstract}

\section{Introduction}

Space-time trade-offs are a widely observed phenomenon in data structure complexity.
In this work, we are interested in trade-offs between offline preprocessing advice length and online running time in inverting random functions, namely, the trade-off between the size $S$  (in the number of bits) of pre-computed data structure (or advice) on the function (but not the image that we wish to invert) and the algorithm's running time $T$ for computing the inverse of a certain image.
Such trade-offs give lower bound for algorithms that inverts cryptographic functions without taking the specific structure of that family of functions.

Without pre-computed advice ($S = 0$), classical computers require $T = \Omega(\eps N)$ for inverting a random image for a random function $f: [N] \mapsto [N]$ with probability $\eps$, and quantum computers require $T = \Omega(\sqrt{\eps N})$ \cite{ambainis2002quantum} to do so.
Both bounds are asymptotically tight, since we observe that exhaustive search and Grover's algorithm~\cite{grover1996fast} on input range $[\eps N]$ inverts an $\eps$ fraction of inputs, respectively.
However, if we allow some pre-computed advice, classical computers can do much better.
Hellman \cite{hellman1980cryptanalytic} showed that every function can be inverted with $S = T = \tilde O(N^{2/3})$ and every permutation can be inverted using only $S = T = \tilde O(N^{1/2})$.
However, it is not known whether we can do better than Grover's algorithm or Hellman's algorithm, even if we allow quantum computers to come into play.
Therefore motivated by post-quantum cryptanalysis, it is natural to ask whether these two algorithms are indeed the best that we can do.

For classical computers, De et al. \cite{de2009non} (going back to ideas of Yao \cite{yao1990coherent}) showed that $ST = \tilde\Omega(\eps N)$ is required for both functions and permutations, and Corrigan-Gibbs and Kogan \cite{DBLP:journals/eccc/Corrigan-GibbsK18} gave some evidence that improving this lower bound seems to be difficult, by connecting function inversion problem to several other hard problems in complexity theory, communication complexity, etc.
For quantum computers, Nayebi et al. \cite{nayebi2015quantum} showed that $ST^2 = \tilde\Omega(\eps N)$ is required, however, this result only applies to the case where the computation and the oracle queries are quantum but the pre-computed advice remains classical.
However, they also noted that the advice given to a quantum computer can as well be quantum, and it remains open to prove a lower bound for computations in that model.

\subsection{Our Contributions}

In this work, we resolve this discrepancy by showing that $ST^2 = \tilde\Omega(\eps N)$ is still required even if the inverter is allowed to use quantum advice.
Formally,
\begin{definition}
A function (or permutation) inverter is a pair $(\alpha, \mathcal A)$, where:
\begin{enumerate}
    \item $\alpha = \alpha(f)$ is a pre-computed quantum advice of $S$ qubits, which can depend on the function $f: [M] \mapsto [N]$; (for permutations, $M = N$)
    \item $\mathcal A$ is a quantum oracle algorithm that takes advice $\alpha$ and an image $y \in [N]$, makes at most $T$ quantum queries to the function as an oracle $O_f$, and outputs a supposed pre-image $x \in [M]$.
\end{enumerate}
\end{definition}

%\begin{remark}
%    One could also consider the possibility of $\alpha$ being a mixed state, however, it is easy to see that the inversion probability for such a mixed state is simply a convex combination of those of pure states.
%    Therefore, it is sufficient to bound the inversion probability of pure strategies (inverters using pure state) to bound the inversion probability of mixed strategies as well.
%\end{remark}

\begin{definition}
Fix a function inverter $(\alpha, \mathcal A)$. 
\begin{itemize}
    \item We say that ``$(\alpha, \mathcal A)$ inverts $y$ for $f$" if 
    \begin{equation*}
        \Prob{}[f(\mathcal A^{f}(\alpha, y)) = y] \ge 2/3,
    \end{equation*}
    where the probability is taken over the measurement results (internal randomness) of $\mathcal A$.
    \item For any real $\eps$, we say that ``$(\alpha, \mathcal A)$ inverts $\eps$ fractions of inputs" if 
    \begin{equation*}
        \Prob{y, f}[(\alpha, \mathcal A) \textnormal{ inverts $y$ for $f$}] \ge \eps,
    \end{equation*}
    where $y$ and $f$ is sampled uniformly from $[N]$ and $S_N$, respectively.
\end{itemize}

\end{definition}

\begin{theorem}(Lower bound for permutations)
\label{permutations}
For any permutation inverter that invert $\eps$ fractions of inputs, assuming:
\begin{enumerate}
    \Item \begin{equation}
        \label{eq-nondegenerate-eps}
        \eps = \omega(1/N),
    \end{equation}
    that is, the inverter can succeed on more than a constant number of points;
    \Item \begin{equation}
        \label{eq-nondegenerate-T}
        T = o(\eps \sqrt{N}),
    \end{equation}
    noting that $T = O(\sqrt{\eps N})$ is the complexity of Grover's search algorithm;
    \Item \begin{equation}
        \label{eq-nondegenerate-S}
        S \ge 1.
    \end{equation}
\end{enumerate}
We have
\begin{equation*}
    ST^2 \ge \tilde\Omega(\eps N)
\end{equation*}
for all sufficiently large $N$.
\end{theorem}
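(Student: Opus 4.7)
The plan is to reduce the theorem to the generalized quantum random access code (QRAC) lower bound advertised in the introduction. At a high level, given a permutation inverter $(\alpha, \mathcal{A})$ that succeeds on an $\eps$ fraction of $(f,y)$ pairs, I would construct, for a typical permutation $f$, a short quantum message from which a decoder can recover $f^{-1}(y)$ for every $y$ in some large set $G' \subseteq [N]$. The message consists of the $S$-qubit quantum advice $\alpha(f)$ together with a classical description of $G'$ and of $f$ restricted to $[N]\setminus f^{-1}(G')$. Once this decoder is built, applying the generalized QRAC lower bound to the list of hidden preimages $(f^{-1}(y))_{y\in G'}$ yields $S \ge \tilde\Omega(|G'|)$, and the theorem follows by engineering $|G'|$ to be $\tilde\Omega(\eps N / T^2)$.

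The first step is to fix a typical advice $\alpha$ so that $G = \{y : (\alpha,\mathcal{A})\text{ inverts }y\text{ for }f\}$ has density $\Omega(\eps)$ for a $\Omega(1)$ fraction of $f$; this is just averaging. The second, more delicate, step prunes $G$ down to a large subset $G'$ on which the inverter's queries almost ignore the other hidden preimages. Let $q_y(x) = \sum_{t=1}^T |\langle x|\phi_t^{(y)}\rangle|^2$ denote the total amplitude that $\mathcal{A}^f(\alpha,y)$ places on $x$ across its $T$ queries; then $\sum_x q_y(x) = T$, so a double-counting argument bounds $\sum_{y\in G}\sum_{y'\in G} q_y(f^{-1}(y'))$ by $T|G|$. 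Sampling $G'\subseteq G$ of size $k$ uniformly at random makes the expected per-image cross-magnitude $O(Tk/|G|)$, and choosing $k = \tilde\Theta(|G|/T^2)$ together with an alteration/Markov argument extracts a $G'$ with $|G'| = \tilde\Omega(\eps N/T^2)$ on which $\sum_{y'\in G'\setminus\{y\}} q_y(f^{-1}(y'))$ is simultaneously small for every $y\in G'$.

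The third step is the standard BBBV ``small queries are harmless'' lemma: replacing the oracle $O_f$ on the points $f^{-1}(G'\setminus\{y\})$ by any fixed answers perturbs the output state of $\mathcal{A}^f(\alpha,y)$ in $\ell_2$ norm by $O(\sqrt{T\cdot q_y(f^{-1}(G'\setminus\{y\}))})$, which by the previous step is smaller than the inverter's success slack. Consequently, a decoder given $\alpha$, $G'$, and $f|_{[N]\setminus f^{-1}(G')}$ can, for any requested $y\in G'$, recover $f^{-1}(y)$ with probability $\Omega(1)$ by simulating $\mathcal{A}$ with a patched oracle. This is exactly the decoder demanded by the generalized QRAC formulation: the list being encoded is the $|G'|$-tuple $(f^{-1}(y))_{y\in G'}$, whose conditional distribution given the classical side information is close to uniform over orderings of the hidden preimages.

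Finally, invoking the generalized QRAC lower bound on this encoding --- whose only quantum component is the $S$-qubit advice, since $G'$ and $f|_{[N]\setminus f^{-1}(G')}$ are classical and can be treated as shared side information --- gives $S \ge \tilde\Omega(|G'|) = \tilde\Omega(\eps N/T^2)$, which rearranges to the claimed $ST^2 \ge \tilde\Omega(\eps N)$. I expect the main technical obstacle to be the combinatorial pruning: a naive sampling only gives small \emph{expected} cross-magnitude, whereas the reduction needs every surviving $y$ to have small cross-magnitude simultaneously. Handling this will likely require iteratively removing ``heavy'' $y$'s while tracking the density loss and absorbing polylogarithmic factors into the $\tilde\Omega$ notation. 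The three hypotheses $\eps = \omega(1/N)$, $T = o(\eps\sqrt N)$, and $S \ge 1$ serve precisely to guarantee that $|G'|$ stays large enough for the QRAC bound to be non-trivial and that the BBBV error remains absorbable.
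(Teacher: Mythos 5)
Your high-level architecture matches the paper's: reduce to a QRAC-type bound, use the BBBV/swapping lemma to argue that a patched oracle barely perturbs the inverter, extract a large set $G'$ of images whose preimages can be decoded, and read off the trade-off. However, the final step of your plan --- invoking the QRAC lower bound only on the $S$-qubit advice while treating the classical description of $G'$ and $f|_{[N]\setminus f^{-1}(G')}$ as shared side information --- is not valid, and fixing it forces you into two further ideas that you omit.

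In the QRAC-VL model (Theorem~\ref{theorem-qrac-vl}), the shared randomness $R$ must be sampled independently of the encoded function; anything $f$-dependent, including $G'$ and $f$ restricted outside $G'$, is part of the encoding and must be counted in $L$. Once it is counted, the inequality to chase is roughly $S + \log N! - \log|G'|! + |\textnormal{descr.\ of }G'| \ge \log N! - (\textnormal{loss from }\delta)$, and two issues surface. First, your $G'$ is subsampled from a set $G$ that already depends on $f$, so describing $G'$ costs $\log\binom{N}{|G'|} = O(|G'|\log(N/|G'|))$ bits, which can exceed $\log|G'|!$ whenever $|G'| < \sqrt{N}$; this makes the bound vacuous over part of the regime $T = o(\eps\sqrt N)$. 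The paper avoids this by sampling a random $R\subseteq[N]$ using the shared randomness (hence free to describe) and defining $G$ inside $R$, so that $G$ costs only $\log\binom{|R|}{|G|} = O(|G|\log(1/\eps))$ bits, which is a lower-order term. Second, a single copy of the quantum advice only gives per-element decoding probability $\Omega(1)$, so the averaged success probability is $\delta = 1 - \Theta(|G'|/N)$, and the QRAC loss term $N\bigl(H(\delta) + (1-\delta)\log N\bigr) = \Theta(|G'|\log N)$ swamps the $\log|G'|! \approx |G'|\log|G'|$ gain unless $|G'| \gg N^{\Omega(1)}$. The paper fixes this by packing $\rho=\Theta(\log N)$ independent copies of the advice into the encoding and decoding by majority vote, boosting the per-element success to $1 - \exp(-\Omega(\rho))$ so that the loss term becomes negligible. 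Without these two devices --- sampling $R$ via shared randomness and repeating the advice --- the reduction you sketch does not deliver the claimed $ST^2 \ge \tilde\Omega(\eps N)$.
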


\begin{theorem}(Lower bound for functions)
\label{functions}
For any function inverter that invert $\eps$ fractions of inputs, assuming:
\begin{enumerate}
    \Item \begin{equation}
        \label{eq-not-too-many-preimages}
        M = O(N),
    \end{equation}
    \Item \begin{equation}
        \label{eq-function-nondegenerate-T}
        T = o(\eps \sqrt{M}/\log^{10} N),
    \end{equation}
    noting that $T = O(\sqrt{\eps M})$ is the complexity of Grover's search algorithm;
    \Item \begin{equation}
        \label{eq-function-nondegenerate-eps}
        \eps \ge 1/N,
    \end{equation}
    that is, the inverter performs no worse than a fixed point output inverter;
    \Item \begin{equation}
        \label{eq-function-nondegenerate-S}
        S \ge 1.
    \end{equation}
\end{enumerate}
We have
\begin{equation*}
    ST^2 \ge \tilde\Omega(\eps M)
\end{equation*}
for all sufficiently large $M$.
\end{theorem}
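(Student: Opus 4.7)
The plan is to reduce the function inversion problem to the generalized quantum random access code (QRAC) problem, following the same template used for Theorem~\ref{permutations}. The two differences to handle are that preimage sets $f^{-1}(y)$ may be empty or have size greater than one, and that the domain $[M]$ may differ from the range $[N]$; both are accommodated by the hypothesis $M=O(N)$.

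First, using a standard balls-into-bins computation, the hypothesis $M=O(N)$ guarantees that for a uniformly random function $f:[M]\to[N]$, with high probability a constant fraction of images $y\in[N]$ have non-empty preimage, and the total preimage mass is $\Theta(M)$. Consequently, the inverter $(\alpha,\mathcal A)$---which succeeds on an $\eps$-fraction of all images---also succeeds on an $\Omega(\eps)$-fraction of images with non-empty preimage, and on each such $y$ produces \emph{some} $x\in f^{-1}(y)$. Fixing a canonical preimage (for instance the inverter's output distribution on a given $(\alpha,y)$) yields a family of random variables $X_y\in[M]\cup\{\bot\}$, $y\in[N]$, that the inverter recovers with probability $\Omega(\eps)$ from the $S$-qubit advice $\alpha(f)$ and $T$ quantum queries.

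Second, I would convert the inverter into a variable-length quantum encoding of the list $(X_1,\ldots,X_N)$ that is suitable for the generalized QRAC lower bound. The advice $\alpha(f)$ contributes the fixed $S$-qubit quantum part; the $T$ quantum queries need to be absorbed into the encoding. As in the permutation case, I would use an amplification-plus-query-bookkeeping argument: by running the inverter in sequence on several random images and tracking the queries via a compressed-oracle style analysis, the joint state of the advice together with a classical description of the query-accessed function values forms an oracle-free encoding from which a random $X_y$ can be recovered with constant probability. Feeding this encoding into the generalized QRAC lower bound (the paper's main technical contribution) yields $ST^2\ge\tilde\Omega(\eps M)$ after collecting the polynomial-in-$\log N$ factors that arise along the way; the hypothesis $T=o(\eps\sqrt M/\log^{10}N)$ is used here to rule out the degenerate case where Grover's algorithm on its own explains the inversion success.

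The main obstacle will be adapting the reduction to handle the fact that preimages in a function have variable (and sometimes zero) size, unlike in a permutation where every image has exactly one preimage. This non-uniformity means that the random variables $X_y$ are not identically distributed over $[M]$, so the generalized QRAC lower bound must be applied with entropy weights that reflect the preimage-size distribution; controlling these entropy contributions while the inverter is repeatedly applied, and simultaneously ensuring that the compressed-oracle bookkeeping does not inflate the encoding beyond what the QRAC bound can absorb, is precisely what forces the polylogarithmic slack $T=o(\eps\sqrt M/\log^{10}N)$ in the theorem's hypothesis.
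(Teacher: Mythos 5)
Your high-level plan (reduce to the generalized QRAC-VL bound, absorb the advice and queries into a variable-length encoding, and use $M=O(N)$ to keep the bookkeeping under control) matches the paper's strategy. But the concrete reduction you sketch diverges from the paper's in ways that break the argument. First, you propose encoding a vector $(X_1,\dots,X_N)$ of \emph{canonical preimages} (the inverter's modal output), rather than the full partition $f^{-1}$. The QRAC-VL bound is only as strong as $S(X)$, and if $X_y$ is chosen by the adversarial inverter there is no reason $S(X)$ should be anywhere near $M\log N$ — the inverter can make its output distribution nearly deterministic. The paper instead treats $X$ as the truth table of the partition $f^{-1}\in P_M$ (so $S(X)=M\log N$), conditions on $f$ being typical so that every fiber has at most $K=\tilde O(1)$ preimages, and keeps at most one designated preimage per image in the "good" set $G$, so $|f(G)|=|G|$ and the decoder can reconstruct the entire bag $f^{-1}(y)$, not just one element.

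Second, and more importantly, you omit the ingredient that makes the function case genuinely harder than permutations: the decoder cannot verify a candidate $\hat x\in G$ returned by the inverter, because it does not know $f(\hat x)$. In the permutation proof, verification is free ($\pi'(\hat x)=y$ forces $\hat x$ correct up to a unique choice), but here the inverter run against the fake oracle $f'$ (which sends all of $G$ to $y$) can output any element of $G$ and the decoder has no way to reject it. The paper's solution is to store a short $2$-universal hash tag of length $\log K+\log\log N$ for each $y\in f(G)$ and then take a hash-matching output among $\rho=\tilde O(K)$ runs; this is exactly where the extra $\log$ factors in $T=o(\eps\sqrt M/\log^{10}N)$ are spent, not in any compressed-oracle-style bookkeeping. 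Your "compressed-oracle" and "run-the-inverter-in-sequence" plan is a different technique from the paper's — which follows Nayebi et al. by deleting a random subset $R$, defining $G$ as low-query-magnitude elements via Markov, and applying the swapping lemma to bound the effect of the deletion — and as stated it gives no mechanism to filter false positives or to control the decoder's error probability below the $1/N$ scale the QRAC-VL bound requires. Until both the hash-tag verification and the partition-level encoding are in place, the reduction does not go through.
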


Towards proving these two theorems, we also develop a lower bound for a natural generalization of quantum random access code (QRAC).
We believe the notion of quantum random access code is a natural object to study in quantum information theory, and that our generalization has potential to find other applications in quantum information.
In \Cref{sec:qrac-vl}, we will explain the concept more thoroughly and prove the lower bound.

%\subsection{Historical Note and Concurrent Work}
\subsection{Related Work}
\label{sec:concurrent-work}

%This version of the paper represents the submission version for QIP 2020. 
%The differences of this version and the submission version are fixing a problem in the proof as pointed out by the reviewers, and adding the second paragraph in the abstract  highlighting our lower bound on generalized quantum random access codes. We learned the concurrent work of Hhan, Xagawa, and Yamakawa \cite{cryptoeprint:2019:1093} after submitting our paper to QIP 2020.
%Below we briefly compare the results of two papers. 

Independently in \cite{cryptoeprint:2019:1093}, they considered a number of cryptographic applications of random functions under both classical advice (quantum query) model and quantum advice model, which they denote as AI-QROM and QAI-QROM respectively.
Under quantum advice model, their Theorem 6 showed bounds for inverting random permutations using different techniques, namely, gentle measurements and semi-classical oracle.
%They also considered a weaker adversary whose success probability is taken over its internal measurement randomness as well as the random function and challenge.
%Their asymptotic bound can be improved under our setting by removing the amplification of the adversary and we can get an exact same bound.
%%We can similarly implement the amplification they applied there and we can get a similar asymptotic bound.

However, in their work, they left the problem of proving bounds for random functions open and we partially give some answers to that open problem in this work.
They noted that generalizing this to function inversion seems problematic -- to use gentle measurement lemma, we need to boost the per-element success probability to $1 - O(1/N^4)$; however, in the function case, even with our idea of using 2-universal hash functions (which we outline in the technical overview section), we cannot hope to boost the per-element success probability beyond $1 - o(1/N)$ as it would already make storing all the hash tags too expensive for an efficient encoding.
In conclusion, it seems hopeless to combine gentle measurement technique with our 2-universal hash for adversaries with constant success probability on $\eps$ fractions of input.
Our QRAC technique, on the other hand, works and gives non-trivial bound even if the per-element success probability is as low as $1 - O(1/\log N)$ under the exact same setting.
This shows that our QRAC technique seems to be able to achieve some improvements compared to their approach.
%It might still be possible to improve gentle measurement technique if we allow the encoding to remember $G$ (along with some other changes) so that it can also handle function inversion for some parameters.
%Unfortunately, as noted at the end of our paper, neither of the two approaches seems to be able to give a meaningful bound under the setting where $M \ge N^2$, since under that setting, even remembering $G$ will become too expensive.
We also note that our proof technique does not involve internal measurements in the compress/decompress algorithm and is conceptually simpler.

%Finally, we make the observation that while our techniques seem fundamentally different from theirs, both arguments require some flavor of boosting of the adversary (although for seemingly different reasons).
%If we consider the weaker starting assumption where we simply assume the adversary succeeds with probability $\eps$ over random functions, measurement outcomes, \textit{and} random inputs, in this case, we would need to first repeat the adversary $1/\eps$ times to boost it into the setting that we consider (which is exactly what \cite{cryptoeprint:2019:1093} does).
%In the end, this translates to a loss in the exponent for $\eps$, and this issue is inherent in our argument as well.

\section{Technical Overview}
\label{sec:tech}

\subsection{Permutations}

We first show how to solve the permutation inversion problem, which is an easier argument.

\paragraph*{Compression argument.}
In De et al. \cite{de2009non}, the main idea in proving the lower bound is to leverage the inverter to produce an algorithm that compresses the permutation into a short string, and the information theoretic lower bound on the size of the string translates to our desired lower bound.
However, as the inverter needs to make $T$ adaptive queries, we need to produce the correct answer for the inverter so that she can successfully invert the image and we can extract the information from the inverter.
The way to do this is to randomly remove a small enough subset of the image from the permutation.
As we are picking a small independently random subset, the probability that the inverter hits this subset will be small.
Therefore, we can use the advice and the permutation without the removed fraction as the encoding for the permutation, and since the length of the encoding is lower bounded by the entropy of all the permutations the encoding scheme is able to compress, this translates to a lower bound in the space-time trade-off for the permutation inversion problem.
In the process, we ``cheated'' by using some shared randomness, but it turns out we can fix this since having shared randomness does not affect the information theoretic lower bound that we need in the end.

As shown by Nayebi et al. \cite{nayebi2015quantum}, this idea also holds similarly for algorithms that can make quantum queries to the permutation.
Namely, if we change $\delta$ fraction of the input, by a similar argument to proving the optimality of Grover's algorithm \cite{ambainis2002quantum}, a quantum query algorithm is required to take $\Omega(\sqrt{1/\delta})$ queries to distinguish the change with constant probability.
However, they also have shown that this approach has a fundamental limitation when one tries to adapt it to the case where the pre-processed information can be quantum.
Recall that in order to invoke the inverter to recover a deleted entry, we need to invoke it with the pre-computed advice.
If the advice is classical, we can simply repeat this process for every entry to recover the entire permutation table; but if the advice is quantum, we cannot hope to do this repeatedly as the previous copy would be destroyed by measurement, and we cannot hope to clone multiple copies of the advice for free due to no cloning theorem \cite{wootters1982single}.
The only thing we can do is to produce multiple copies of the same advice in the encoding phase, however, if we work out the calculation, we can see that this encoding scheme is too inefficient for proving a meaningful lower bound for inverting permutations.

\paragraph*{Avoiding repeated measurements.}
Approaching this challenge, our idea is to reduce the problem to a similar problem that does not require recovering the entire permutation table.
Ambainis et al. \cite{ambainis1999dense} introduced the notion of Quantum Random Access Code with Shared Randomness, which is a two-player game where two players share some randomness $R$; the first player $\mathcal A$ gets a bit string $X$ chosen uniformly at random and is asked to encode it into an encoding $Y \gets \mathcal A(X, R)$; and the second player is asked to recover $X_i$ given $Y, R$ and some index $i \in [|X|]$ chosen uniformly at random.
Assuming the two player succeeds with probability $\delta$, the number of bits in $Y$ is lower bounded by (with some very rough approximations when $\delta \rightarrow 1$) $|Y| \ge \delta |X|$.
It can be shown that this lower bound is tight even when everything is classical, simply by observing that an algorithm that simply remembers a $\delta$ fraction of the input wins the game with probability $\delta$.
This game has found several applications in quantum information theory and quantum cryptography, for example~\cite{alagic2018non}.

Thus, a natural idea is to come up with a similar lower bound for quantum random access code with shared randomness for permutations and do the reduction.
However, unlike in the case of bit strings, as there is correlations between each element of the permutation, our lower bound argument would need to proceed very carefully.
Indeed, in this work we proved a lower bound on the expected number of qubits which is only related to the overall entropy, the average element entropy, and the recovery success probability.
Furthermore, this holds even if there exists correlations between the elements.
In general, this lower bound is weaker than the compression argument where the entire permutation is recovered.
However, we note that if the success probability is high, say $\delta \ge 1 - O(1/N)$ for permutations, then the expected number of qubits needs to be at least $\log N! - O(\log N)$, which asymptotically matches the lower bound for compression argument in the classical case.

A direct encoding scheme would be using the encoding scheme of Nayebi et al. \cite{nayebi2015quantum} and decode only the element in question.
However, this direct idea does not work, since we are randomly removing entries from the permutation, the scheme only succeeds when the removed entries (determined by shared randomness $R$) does not affect the output of the inverter, which only happens with a small probability.
This means that $\delta$ will be bounded away from 1.
Recall that our encoding will need to remember $1 - o(1)$ fraction of the permutation, this gives us no meaningful bound.
In fact, in order for this idea to succeed, we need to boost the success probability to also $1 - o(1)$.

We observe that in our proof for quantum random access code, the length of our encoding is ultimately bounded by the von Neumann entropy of the encoding.
By using the variable length version of quantum source coding theorem, we can also use a variable length encoding that is still bounded by the von Neumann entropy of the encoding.
Specifically, if the randomness will cause the encoding to err, we will simply use the entire permutation table as our encoding, which the decoder can decode any element directly.
By repeating the advice poly-logarithmically many times, we can make the success probability sufficiently close to 1 for proving a meaningful bound.

\subsection{Functions}

To bootstrap the previous argument into an argument for function inverters, we can view the inverse function $f^{-1}$ as a partition of $[M]$, and our goal is to design a random access code for querying this partition.
In order to accommodate all possible adversaries, we only pick the pre-images that have high probability to be returned by the adversary.
However, consider the following bad case, $f^{-1}(y) = \{x_1, x_2\}$, and the adversary uniformly returns $x_1, x_2$ or a third bad output $x'$.
In this case, majority vote will not work since (without loss of generality) assuming we removed $x_1$ from the encoding, the decoder cannot distinguish adversary returns $x_1$ or $x'$ (assuming the adversary gets lucky so that $x'$ is also removed from the encoding).
To fix this, we use a 2-universal hash function (sampled from shared randomness) and use the hash tag to distinguish the correct output.

However, we need to choose the hash length very carefully, as choosing a length too short results in high error probability, and length too long results in inefficient coding (our goal is to achieve nontrivial savings for the random function).
In particular, due to our QRAC bound, we must choose our length tag to be much shorter than $\log N$ to get a nontrivial bound for function inversion.
It turns out that using a length of $\log \log N$ works in our case.

\section{Preliminaries}

We denote $[N]$ to be $\{k\in \mathbb{Z}: 1\le k\le N\}$, and the set of all possible bijections from $[N]$ to itself to be $S_N$.

\begin{definition}(Quantum oracle)
For any classical function $f: X \mapsto Y$ where $Y$ is some additive group, it naturally corresponds to a quantum oracle $O_f$ such that for all $x \in X, y \in Y$,
\begin{equation*}
    O_f(\ket{x} \ket y) = \ket x \ket{y + f(x)}.
\end{equation*}

\end{definition}

Let $\mathcal A^O$ be a quantum oracle algorithm taking $O$ as an oracle.
In the rest of the paper, we will abuse the notation $\mathcal A^f$ to represent $\mathcal A^{O_f}$.
For random oracles, it is equivalent to viewing oracle calls as the same as querying from an exponential sized truth table of the oracle.

\begin{definition}
The query magnitude at $j$ of $\ket\phi = \sum_c \alpha_c \ket c$ is defined to be $q_j(\ket \phi) = \sum_{c \in C_j} |\alpha_c|^2$, where $C_j$ is the set of all computational basis states that query position $j$.
\end{definition}

\begin{definition}
\label{def-total-query-magnitude}
Given a quantum algorithm $\mathcal A$, the total query magnitude at $j$ of $\mathcal A$ with (oracle access to) input $x$ is defined to be $q_j(x) = \sum_{\ket \phi} q_j(\ket \phi)$, where the sum is taken over all the quantum queries produced by the algorithm.
\end{definition}

\begin{lemma}(Swapping lemma) \cite[Lemma 3.1]{vazirani1998power}
\label{swapping-lemma}
Let $\ket{\phi_x}$ and $\ket{\phi_y}$ be the final state of $\mathcal A$ on inputs $x$ and $y$ respectively. Let $T$ be (the upper bound of) the number of queries $\mathcal A$ has made. Then:
\begin{equation*}
    \| \ket{\phi_x} - \ket{\phi_y} \| \le \sqrt{T\sum_{j : x_j \neq y_j}{q_j(x)}},
\end{equation*}
where $\| \ket{\phi_x} - \ket{\phi_y} \|$ denote the Euclidean distance between the two vectors.
\end{lemma}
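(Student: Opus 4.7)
The plan is a standard hybrid argument, proceeding one query at a time and bounding the deviation introduced at each query by the $L^2$-weight of the current state on the registers where $x$ and $y$ differ. Write the algorithm as input-independent unitaries $U_0, U_1, \ldots, U_T$ interleaved with oracle calls, so that
\[
\ket{\phi^t_x} \;:=\; U_t\, O_x\, U_{t-1}\, O_x\, \cdots\, U_1\, O_x\, U_0 \ket{0}
\]
is the state immediately after the $t$-th query on input $x$, and analogously for $y$; in particular $\ket{\phi^0_x} = \ket{\phi^0_y}$ and $\ket{\phi^T_x} = \ket{\phi_x}$.

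The first key step is to telescope the final difference into a sum of one-query perturbations. Using the identity
\[
\ket{\phi^t_x} - \ket{\phi^t_y} \;=\; U_t (O_x - O_y) \ket{\phi^{t-1}_x} \;+\; U_t O_y \bigl( \ket{\phi^{t-1}_x} - \ket{\phi^{t-1}_y} \bigr)
\]
together with unitarity of $U_t$ and $O_y$, an easy induction on $t$ combined with the triangle inequality yields
\[
\| \ket{\phi_x} - \ket{\phi_y} \| \;\le\; \sum_{t=1}^{T} \bigl\| (O_x - O_y) \ket{\phi^{t-1}_x} \bigr\|.
\]

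Next, I would bound each summand by the query magnitude at differing indices. Expanding $\ket{\phi^{t-1}_x} = \sum_c \alpha_c \ket c$ in the computational basis, $(O_x - O_y) \ket c$ vanishes unless $c \in C_j$ for some $j$ with $x_j \neq y_j$, in which case it is a difference of two orthogonal unit basis vectors (the answer register gets shifted by $x_j$ versus $y_j$). For distinct such $c$'s the resulting image vectors live in disjoint computational-basis supports and are therefore mutually orthogonal, so all cross terms vanish and
\[
\bigl\| (O_x - O_y) \ket{\phi^{t-1}_x} \bigr\|^2 \;=\; 2 \sum_{j : x_j \neq y_j} q_j(\ket{\phi^{t-1}_x}).
\]

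Finally, applying Cauchy--Schwarz to the sum over $t$ and swapping the order of summation gives
\[
\sum_{t=1}^{T} \bigl\| (O_x - O_y) \ket{\phi^{t-1}_x} \bigr\| \;\le\; \sqrt{T \sum_{t=1}^{T} \bigl\| (O_x - O_y) \ket{\phi^{t-1}_x} \bigr\|^2} \;\le\; \sqrt{2T \sum_{j : x_j \neq y_j} q_j(x)},
\]
where the last step uses $\sum_{t=1}^{T} q_j(\ket{\phi^{t-1}_x}) = q_j(x)$ from \Cref{def-total-query-magnitude}. The main obstacle, though a mild one, is the orthogonality bookkeeping in the per-step norm bound: one has to verify that the vectors $\{(O_x - O_y)\ket c\}_c$ indexed by distinct basis states really do sit in pairwise disjoint supports so that the squared norm splits cleanly into a sum. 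The spurious factor of $\sqrt{2}$ relative to the stated bound can be absorbed by convention (for instance, folding a factor of $2$ into the definition of $q_j$, as in Vazirani's original normalisation); the remainder of the argument is a routine induction plus Cauchy--Schwarz.
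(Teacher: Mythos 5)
The paper does not prove this lemma; it is cited directly from Vazirani, so there is no in-paper proof to compare against. Your overall strategy (telescoping the difference into per-query perturbations, then bounding each by query weight on the differing coordinates, then Cauchy--Schwarz) is exactly the standard BBBV hybrid argument and is the right approach. The telescoping identity and the final Cauchy--Schwarz step are both correct.

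However, the per-step bound contains a genuine error in the orthogonality claim. You assert that for distinct basis states $c$ the vectors $(O_x - O_y)\ket{c}$ land in disjoint computational-basis supports, so cross terms vanish and $\|(O_x-O_y)\ket{\phi^{t-1}_x}\|^2 = 2\sum_j q_j$. This fails. Take a Boolean XOR oracle with $x_j \ne y_j$ and fix the workspace $w$. Then $(O_x - O_y)\ket{j,0,w} = \pm\bigl(\ket{j,0,w}-\ket{j,1,w}\bigr)$ while $(O_x - O_y)\ket{j,1,w} = \mp\bigl(\ket{j,0,w}-\ket{j,1,w}\bigr)$: these two images are \emph{parallel}, not orthogonal, and their supports coincide. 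More generally, for an additive oracle $O_x\ket{j,z,w}=\ket{j,z+x_j,w}$, the images of $(j,z,w)$ and $(j,z',w)$ collide whenever $z' = z + (x_j - y_j)$. As a concrete consequence, with $\alpha_{(j,0,w)}=1/\sqrt{2}$, $\alpha_{(j,1,w)}=-1/\sqrt{2}$ one gets $\|(O_x-O_y)\ket{\phi}\|=2$ while $\sqrt{2\sum_j q_j}=\sqrt{2}$, so the claimed identity is strictly violated.

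The correct per-step bound is obtained by the triangle inequality: letting $\Pi$ project onto $\bigcup_{j:x_j\neq y_j} C_j$, one has $(O_x-O_y)\ket{\phi}=(O_x-O_y)\Pi\ket{\phi}$, hence $\|(O_x-O_y)\ket{\phi}\|\le\|O_x\Pi\ket{\phi}\|+\|O_y\Pi\ket{\phi}\|=2\|\Pi\ket{\phi}\|=2\sqrt{\sum_{j:x_j\ne y_j} q_j(\ket{\phi})}$, and this constant $2$ is tight (the example above achieves it). Propagating through your Cauchy--Schwarz step gives $\|\ket{\phi_x}-\ket{\phi_y}\|\le 2\sqrt{T\sum_{j:x_j\ne y_j} q_j(x)}$, i.e.\ a constant factor of $2$, not $\sqrt{2}$. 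Your closing remark that the extra $\sqrt{2}$ is a normalization artifact is therefore not the right diagnosis: the discrepancy arises because the orthogonality step is wrong, not because of a convention, and the resulting constant is $2$ rather than $\sqrt{2}$. This is harmless for the paper's application (where the bound is only used up to constants via $\sqrt{c}$), but the derivation as written does not go through and should replace the orthogonality argument with the triangle inequality.
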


\begin{theorem} (Quantum Source Coding Theorem) \cite{schumacher2001indeterminate}
	\label{quantum-source-coding-theorem}
	Let $\Sigma$ be an alphabet, $\rho \in D(\Complex^\Sigma)$ be a density operator whose von Neumann entropy is $S(\rho)$.
	\begin{enumerate}
		\item If $L > S(\rho)$, then $N$ independent samples of $\rho$ can be losslessly compressed into $LN$ qubits for all sufficiently large $N$;
		\item If $L < S(\rho)$, then $N$ independent samples of $\rho$ can be losslessly compressed into $LN$ qubits for at most finitely many $N$'s.
	\end{enumerate}
\end{theorem}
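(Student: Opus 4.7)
The plan is to prove both parts by reducing to classical Shannon source coding via the spectral decomposition of $\rho$.

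For the direct part ($L > S(\rho)$), first diagonalize $\rho = \sum_i \lambda_i \ket{\psi_i}\!\langle \psi_i|$ with orthonormal eigenvectors, and recall that $S(\rho) = H(\lambda)$ is the Shannon entropy of the eigenvalue distribution. Then $\rho^{\otimes N}$ is diagonal in the product basis $\{\ket{\psi_{i_1}}\otimes\cdots\otimes\ket{\psi_{i_N}}\}$ with classical probabilities $\lambda_{i_1}\cdots\lambda_{i_N}$. Classical Shannon coding yields, for any rate $L > H(\lambda)$, a prefix-free binary code $c$ on $N$-tuples of expected length at most $LN$ for all sufficiently large $N$. I would then lift $c$ to an isometry $V$ sending $\ket{\psi_{i_1}}\otimes\cdots\otimes\ket{\psi_{i_N}}$ to the indeterminate-length qubit string $\ket{c(i_1,\dots,i_N)}$; prefix-freeness guarantees that distinct codewords land in orthogonal length sectors, so $V$ is well defined on the support of $\rho^{\otimes N}$. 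The encoder outputs $V \rho^{\otimes N} V^\dagger$, whose expected qubit length is at most $LN$, and the decoder applies $V^\dagger$; since $V$ is a genuine isometry, recovery is exact.

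For the converse ($L < S(\rho)$), suppose towards contradiction that a scheme losslessly compresses $\rho^{\otimes N}$ into an average of $\ell_N \le LN$ qubits. Pre-measuring the input in the eigenbasis of $\rho^{\otimes N}$ and measuring the decoder's output in the same basis produces, end to end, a classical lossless code for $N$ i.i.d.\ samples from $\lambda$; reading out the length register of the compressed state gives this classical code expected binary length at most $\ell_N$. Shannon's classical converse then forces $\ell_N \ge N H(\lambda) - o(N) = N S(\rho) - o(N)$, which strictly exceeds $LN$ for all but finitely many $N$, a contradiction.

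The main obstacle I expect is rigorously setting up the indeterminate-length qubit space from Schumacher's framework: one has to define a length observable on a suitable graded Hilbert space, make precise what ``expected length $LN$'' means for a superposition of codewords of different lengths, and verify that prefix-free classical codes really do extend to well-defined isometries under the chosen zero-extension convention. Once this infrastructure is in place, achievability is essentially a one-line lift of Shannon's coding construction, and the converse reduces to a pre/post-measurement argument plus the classical Shannon bound.
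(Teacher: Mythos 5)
The paper does not prove this theorem; it imports it by citation to Schumacher and Westmoreland's work on indeterminate-length quantum coding and uses only the consequence that a source compressed to average length $L$ has von Neumann entropy at most $L$ (see inequality \eqref{eq-qrac-vc-2}). So there is no in-paper proof to compare your attempt against, and I will evaluate it on its own.

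Your achievability direction is sound and is the standard route: diagonalize $\rho$, reduce to classical Shannon coding of the eigenvalue distribution, and lift a prefix-free code to an isometry on the zero-extended graded qubit space. (One small imprecision: distinct codewords of equal length occupy the \emph{same} length sector and are orthogonal simply as distinct bit strings; prefix-freeness is what makes codewords of different lengths remain distinct after zero-extension.) The converse, however, has a real gap. You propose to ``read out the length register'' and then invoke Shannon's classical converse, but the compressed codeword for a given classical input is a quantum state on the length-$k$ sector, not a classical string, so there is no classical lossless code and no classical decoder to which the Shannon converse applies. Moreover, measuring the length observable on a codeword that is a superposition over lengths could in principle destroy decodability; it happens to be harmless for exactly lossless schemes because the decoder's output is pure and pure states are extremal in the convex set of density operators, but you would need to say this. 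The clean converse avoids the classical detour entirely: a lossless encoder may be taken to be an isometry, which preserves von Neumann entropy, so the encoded ensemble has entropy $N S(\rho)$; and any state on the graded qubit space with expected length $\ell$ has entropy at most $\ell + O(\log \ell)$ (condition on the measured length $k$, bound each sector's contribution by $k$, and bound the entropy of the length distribution itself). Combining gives $N S(\rho) \le LN + O(\log(LN))$, which fails for $L < S(\rho)$ once $N$ is large. Equivalently one can invoke the quantum Kraft inequality from the cited reference, which packages the same dimension-counting.
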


\begin{theorem} (2-Universal Hashing)
	\label{thm-hash}
	For every $\eps$, there exists a 2-universal hash function family with error probability $\eps$ and output length $-\log \eps$ (using some finite amount of randomness). \cite[Chapter~3]{vadhan2012pseudorandomness}
\end{theorem}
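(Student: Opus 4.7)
The statement is a classical result from pseudorandomness, so the plan is to exhibit an explicit family and verify 2-universality. Without loss of generality embed the input domain into $\Bi n$ for some $n$, and let $m = \lceil -\log \eps \rceil$. I would use the standard \emph{affine family over $\mathbb{F}_2$}: sample $A \in \mathbb{F}_2^{m\times n}$ and $b \in \mathbb{F}_2^m$ uniformly and independently, and define $h_{A,b}(x) = Ax + b$. This uses $mn + m$ bits of randomness (finite as promised) and produces an output of length $m$, which is essentially $-\log \eps$ modulo integer rounding.

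To verify 2-universality, fix any distinct $x, y \in \Bi n$. Observe that
\begin{equation*}
    h_{A,b}(x) - h_{A,b}(y) = A(x - y),
\end{equation*}
and that $z := x - y$ is a nonzero vector in $\mathbb{F}_2^n$. For uniformly random $A$, each coordinate of $Az$ equals $\langle a_i, z \rangle$ for the $i$-th row $a_i$, which is uniform over $\mathbb{F}_2$ whenever $z \neq 0$; since the rows are independent, $Az$ is uniform over $\mathbb{F}_2^m$. Therefore $\Pr_{A,b}[h_{A,b}(x) = h_{A,b}(y)] = \Pr_A[Az = 0] = 2^{-m} \le \eps$, which is exactly the claimed error bound.

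There is no real obstacle here, as the construction is textbook. The only minor nuisance is matching the output length exactly: if one insists on length precisely $-\log \eps$ rather than $\lceil -\log \eps \rceil$, one can instead work over a prime field $\mathbb{F}_p$ with $p$ slightly larger than $1/\eps$, using $h_{a,b}(x) = ax + b \pmod p$. Here 2-universality follows because, for any distinct $x, y$ and any target difference $c$, the equation $a(x - y) = c \pmod p$ has a unique solution in $a$, giving collision probability exactly $1/p \le \eps$.
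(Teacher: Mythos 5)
The paper does not prove this statement; it simply cites it as a known fact from Vadhan's monograph. Your proof is correct and is essentially the standard textbook construction (the random affine map over $\mathbb{F}_2$, which is also the construction discussed in the cited reference). The collision-probability calculation $\Pr_A[Az = 0] = 2^{-m}$ for $z \neq 0$ is right, and your remark about rounding (and the $\mathbb{F}_p$ alternative $h_{a,b}(x) = ax + b \bmod p$) correctly addresses the only imprecision in the theorem's phrasing, namely that $-\log \eps$ need not be an integer. Nothing is missing.
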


\section{Quantum Random Access Codes with Variable Length}
\label{sec:qrac-vl}

Intuitively, quantum random access code looks at the following problem:
\begin{itemize}
    \item A random function $f: [N] \mapsto X_N$ is sampled from an \textit{arbitrary} distribution.
    \item At the offline phase, an unbounded algorithm gets access to the entire function and produces a quantum state $\ket\alpha$ of bounded size $\ell$ (therefore dimension at most $2^\ell$).
    \item At the online phase, a uniformly random challenge $x \in [N]$ is generated, and the algorithm given $\ket\alpha$ and $x$ is asked to recover $f(x)$ with probability $\delta$.
\end{itemize}

In this section, we want to prove that there is a trade-off between the expected encoding size $L := \Epc{f}[\ell]$ and the success probability $\delta$.
This is a generalization of QRAC considered in previous works like~\cite{ambainis1999dense} since we can view their QRAC equivalent to ours by making the following restrictions:
\begin{enumerate}
    \item $X_N = \{0, 1\}$.
    \item The function distribution is always the uniform distribution.
    \item The quantum state length $\ell$ is fixed parameter that does not depend on the specific function $f$.
\end{enumerate}

We formalize the problem above as quantum random access code with variable length, as given by the definition below.

\begin{definition}
Let $F_N$ be a set of functions $f: [N] \to X_N$ for some finite set $X_N$.
A quantum random access code with variable length (QRAC-VL) for $F_N$ consists of two algorithms $(\Enc, \Dec)$.
\begin{enumerate}
    \item $\Enc: F_N \times \R \to \Complex^*$. The encoding algorithm encodes a function $f \in F_N$ with some fresh independent randomness in $\R$ to some qubits. The number of qubits denoted by $\ell = \ell(f)$ can depend on the function $f$.
    \item $\Dec: \Complex^* \times [N] \times \R \to X_N$. The decoding algorithm compute $f(x)$ on some specific element $x \in [N]$ with the encoded message in $\Complex^{2^\ell}$, and it uses the same shared randomness for the encoding algorithm.
\end{enumerate}

The performance of the code is measured by two parameters $L$ and $\delta$.
We define
\begin{equation*}
    L := \Epc{f}[\ell(f)]
\end{equation*}
to be the average length of the coding scheme over uniform distribution on $f \in F_N$, and
\begin{equation*}
    \delta := \Prob{f, x, R}[\Dec(\Enc(f; R), x; R) = f(x)]
\end{equation*}
to be the probability that our scheme correctly reconstructs the image of the function, where the probability is taken over uniform distribution on $f \in F_N$, $x \in [N]$, and the scheme's internal randomness.
\end{definition}

First, we prove a helpful lemma that says conditional quantum entropy satisfies subadditivity.

\begin{lemma}
    \label{conditional-entropy-subadditivity-lemma}
	Let $X = (X_1, ..., X_N), Q$ be some quantum states, then
    \begin{equation*}
        \sum_{i = 1}^N S(X_i | Q) \ge S(X | Q).
    \end{equation*}
\end{lemma}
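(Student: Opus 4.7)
The plan is to derive this as a direct consequence of the quantum chain rule for conditional entropy combined with strong subadditivity of von Neumann entropy, which are the standard tools for this sort of inequality in quantum information.

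First, I would set up the quantum chain rule for conditional entropy. Writing $S(AB|Q) = S(ABQ) - S(Q)$ and expanding, one gets $S(AB|Q) = S(A|BQ) + S(B|Q)$ directly from the definition. Iterating this identity on the joint system $X = (X_1, \ldots, X_N)$ yields
\begin{equation*}
    S(X|Q) = \sum_{i=1}^N S(X_i \mid X_1, \ldots, X_{i-1}, Q).
\end{equation*}
This step is purely algebraic and uses nothing beyond the definition $S(A|B) = S(AB) - S(B)$.

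Next, I would invoke strong subadditivity, in the convenient form ``conditioning on more systems cannot increase conditional entropy,'' i.e.\ $S(A|BC) \le S(A|C)$. Applying this to $A = X_i$, $B = (X_1, \ldots, X_{i-1})$, $C = Q$ gives, for each $i$,
\begin{equation*}
    S(X_i \mid X_1, \ldots, X_{i-1}, Q) \le S(X_i \mid Q).
\end{equation*}
Summing over $i$ and combining with the chain rule identity above yields exactly the claimed bound $S(X|Q) \le \sum_i S(X_i|Q)$.

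The main thing to verify is that strong subadditivity is applicable, i.e.\ that all the objects involved really are bona fide quantum systems on a well-defined joint state; in the intended usage the $X_i$ will be classical-quantum registers so this is not an issue, and indeed strong subadditivity holds unconditionally for density operators on finite-dimensional Hilbert spaces. There is no substantive obstacle here: the statement is a textbook-level corollary of strong subadditivity, and the only care required is bookkeeping to apply the chain rule in the correct order before reducing each summand.
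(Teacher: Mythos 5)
Your proof is correct and rests on essentially the same ingredient as the paper's: strong subadditivity of von Neumann entropy applied $N-1$ times. You package it as the chain rule $S(X|Q) = \sum_i S(X_i \mid X_1,\dots,X_{i-1},Q)$ followed by the ``conditioning cannot increase conditional entropy'' form of strong subadditivity, whereas the paper proves the $N=2$ case $S(X_1Q)+S(X_2Q)\ge S(X_1X_2Q)+S(Q)$ directly from strong subadditivity and then inducts; unrolling that induction gives exactly your term-by-term bound, so the two arguments are algebraically equivalent.
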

\begin{proof}
We will prove this for $N = 2$ and it is easy to extend this proof to any $N$ using an inductive argument by showing that
$$\sum_{i = 1}^{N - 1} S(X_i | Q) + S(X_N | Q) \ge S(X_1 ... X_{N - 1} | Q) + S(X_N | Q) \ge S(X | Q).$$

For $N = 2$, by the definition of conditional entropy, it is equivalent to prove $S(X_1 Q) + S(X_2 Q) \ge S(X_1 X_2 Q) + S(Q)$, which holds due to strong subadditivity of von Neumann entropy.
\end{proof}

\begin{theorem}(Lower bound for QRAC-VL)
\label{theorem-qrac-vl}
For any QRAC-VL, let $X = (X_1, \dots, X_N)$ be a random variable sampled uniformly random from the distribution (of truth tables) of functions $F_N$.
Therefore, $S(X)$ is the (von Neumann) entropy of a uniformly random distribution of $F_N$ and $S(X_J)$ is the average (or expected) entropy of a single element.
We have that for all sufficiently large $N$,
\begin{equation*}
L \ge S(X) - N \cdot (H(\delta) + (1 - \delta) \cdot S(X_J)),
\end{equation*}
where $H(x) := -x \log_2 x - (1 - x) \log_2 (1 - x)$ is the binary entropy function.
\end{theorem}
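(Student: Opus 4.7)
My approach combines the converse quantum source coding theorem (\cref{quantum-source-coding-theorem}), Holevo's inequality, the subadditivity lemma just established (\cref{conditional-entropy-subadditivity-lemma}), and a Fano-type bound applied coordinate-wise.

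Consider the classical--quantum state $\sigma_{XQ}$ where $X$ is a classical register recording a uniformly random $f \in F_N$ (i.e., the truth table $(X_1, \dots, X_N)$) and $Q$ is the quantum encoding $\Enc(f; R)$, with the shared randomness $R$ either averaged out or carried as a classical side register. Let $\rho$ denote the average encoded state. A variable-length extension of \cref{quantum-source-coding-theorem}, obtained by considering $n$ iid runs of the protocol and invoking the converse Schumacher bound, yields $L \ge S(\rho)$ in the asymptotic limit. Holevo's inequality then gives $S(\rho) \ge I(X;Q) = S(X) - S(X|Q)$, so it suffices to show $S(X|Q) \le N \bigl(H(\delta) + (1-\delta) S(X_J)\bigr)$.

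By \cref{conditional-entropy-subadditivity-lemma}, $S(X|Q) \le \sum_{i=1}^{N} S(X_i | Q)$. For each $i$, the decoder on input $(Q, i)$ produces a classical estimator $\hat X_i := \Dec(Q, i; R)$, and by the quantum data-processing inequality, $S(X_i|Q) \le H(X_i \mid \hat X_i)$. Writing $E_i := \mathbf{1}[\hat X_i \ne X_i]$ and $\delta_i := \Pr[\hat X_i = X_i]$, expanding through the error indicator yields $H(X_i \mid \hat X_i) \le H(\delta_i) + (1-\delta_i)\, H(X_i \mid \hat X_i, E_i = 1)$ (using $H(E_i \mid \hat X_i) \le H(\delta_i)$ and $H(X_i \mid \hat X_i, E_i = 0) = 0$). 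Summing and applying Jensen's inequality to the concave binary entropy $H$ gives $\sum_i H(\delta_i) \le N H(\delta)$, where $\delta = \tfrac{1}{N}\sum_i \delta_i$. The task reduces to bounding the residual contribution $\sum_i (1-\delta_i)\, H(X_i \mid \hat X_i, E_i = 1) \le N(1-\delta)\, S(X_J)$.

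The main obstacle will be this last residual bound: a naive coordinate-wise estimate $H(X_i \mid \hat X_i, E_i = 1) \le H(X_i)$ can fail in general because conditioning on the error event may \emph{increase} the entropy of $X_i$, so one cannot simply pull the $(1-\delta_i)$ factors out. The cleanest workaround I envision is to reformulate the aggregation as a single Fano decomposition applied to the averaged random pair $(I, X_I)$ with $I$ uniform in $[N]$, so that the global error probability is exactly $1-\delta$ and $H(X_I \mid I) = S(X_J)$ plays the role of the ``average prior'' entropy; the $(1-\delta)$ prefactor then appears outside naturally. For the permutation and random-function settings targeted by the applications (where each $X_i$ is uniform on $[N]$), classical Fano already gives $H(X_I \mid \hat X_I, I, E = 1) \le \log(|X_N| - 1) \approx \log N$, which matches $S(X_J) = \log N$ tightly and yields the claimed bound; substituting into $L \ge S(X) - S(X|Q)$ then completes the proof.
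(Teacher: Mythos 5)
Your plan follows the paper's proof essentially step for step: bound $L$ by $S(Q)$ via the source-coding converse, relate $S(Q)$ to $I(X;Q)$, invoke the subadditivity lemma to pass to the averaged coordinate $X_J$, apply data processing to the decoder's classical output $\hat X_J := \Dec(Q, J; R)$, and then expand through the error indicator exactly as the paper does in its equation \eqref{eq-qrac-vc-5}. So the strategy is not a new route.

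The more interesting point is the ``main obstacle'' you flag. You are right that the estimate $H(X_i \mid \hat X_i, E_i = 1) \le H(X_i)$ does not hold in general, since conditioning on the error event can increase entropy. What is worth noticing is that the paper performs precisely this step without comment: the third line of \eqref{eq-qrac-vc-5} asserts $S(X_J \mid I, \Dec(Q,J;R)) \le \delta \cdot 0 + (1-\delta) S(X_J)$, which after removing the zero ``success'' term is exactly the claim $S(X_J \mid I = \text{fail}, \hat X_J) \le S(X_J)$. Your proposed workaround --- recasting the per-coordinate sums as a single Fano decomposition of $(J, X_J)$ --- does not actually circumvent the problem; the paper already works with the averaged pair $(J, X_J)$ and the same step is the one you flagged. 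And your fallback claim that ``for the permutation and random-function settings $\dots$ each $X_i$ is uniform on $[N]$'' is incorrect for the function application: in \Cref{theorem-qrac-vl-function} the paper uses the partition formulation $P_M$, where $X_i = f^{-1}(i)$ is a random \emph{subset} of $[M]$ with each element included independently with probability $1/N$ --- not uniform on $[N]$, and not maximally entropic on its support ($S(X_J) \approx M \cdot H(1/N)$, far below $\log|X_N| = M$). So the permutation case is indeed saved by Fano (uniform $X_i$ means $H(X_i \mid \text{anything}) \le \log N = S(X_J)$), but the partition case is not, and the bound as written in step \eqref{eq-qrac-vc-5} genuinely needs further justification there. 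In short: your diagnosis of the gap is accurate, but the proposed repair is the same move the paper already makes, and your claim that the applications sidestep the issue is only half right.
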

\begin{proof}
    Sample $R$ independently.
    Let $Q = \Enc(X; R)$ be the encoding.
    Using the fact in conditional mutual information that $I(Q, R; X) = I(Q; X | R) + I(X; R)$ and the fact that $X$ and $R$ are independent classical random variables,
    \begin{equation}
        \label{eq-qrac-vc-1}
        I(Q, R; X) = I(Q; X | R) \le S(Q | R).
    \end{equation}
    Since $R$ is classical, by \Cref{quantum-source-coding-theorem},
    \begin{equation}
        \label{eq-qrac-vc-2}
        S(Q | R) \le S(Q) \le L.
    \end{equation}

    On the other hand, using \Cref{conditional-entropy-subadditivity-lemma},
    \begin{align}
        \label{eq-qrac-vc-3}
        \begin{split}
		    I(Q, R; X)
    		    &= S(X) - S(X | Q, R) \\
    			&\ge S(X) - \sum_{i=1}^N S(X_i | Q, R) \\
                &= S(X) - N \cdot S(X_J | Q, R, J),
        \end{split}
    \end{align}
    By data processing inequality, we know that
    \begin{align}
    \begin{split}
        \label{eq-qrac-vc-4}
        S(X_J | Q, R, J) \le S(X_J | \Dec(Q, J; R)).
    \end{split}
    \end{align}
    Note that $X_J, \Dec(Q, J; R)$ are both classical random variables.
    Let $I$ be the indicator variable that indicates whether $X_J = \Dec(Q, J; R)$.
    By definition of success probability in quantum random access code, we can show that
    \begin{align}
        \label{eq-qrac-vc-5}
        \begin{split}
            S(X_J | \Dec(Q, J; R))
                &= S(X_J, I | \Dec(Q, J; R)) - S(I | X_J, \Dec(Q, J; R)) \\
                &= S(I | \Dec(Q, J; R)) + S(X_J | I, \Dec(Q, J; R)) - 0 \\
                &\le S(I) + \delta \cdot 0 + (1 - \delta) \cdot S(X_J) \\
                &= H(\delta) + (1 - \delta) S(X_J).
        \end{split}
    \end{align}

    Combining \eqref{eq-qrac-vc-1}, \eqref{eq-qrac-vc-2}, \eqref{eq-qrac-vc-3}, \eqref{eq-qrac-vc-4}, and \eqref{eq-qrac-vc-5}, we get the expected equation in the theorem.
\end{proof}

To see an immediate application of this theorem, we will demonstrate proving a bound for QRAC-VL for permutations.
For permutations, $S(X) = \log N!$ and $S(X_J) = \log N$.
Combining the theorem above with the following algebraic fact, we can prove a lower bound for QRAC-VL for permutations.

\begin{fact}
    $H(1 - \delta) = H(\delta) \le \delta \cdot \log(e/\delta).$
\end{fact}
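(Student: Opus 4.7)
The equality $H(1-\delta) = H(\delta)$ is immediate from the definition $H(x) = -x \log x - (1-x) \log(1-x)$, since substituting $x \mapsto 1-x$ merely swaps the two summands. So the only real content is the inequality $H(\delta) \le \delta \log(e/\delta)$.

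For the inequality, the plan is to rewrite $H(\delta) = \delta \log(1/\delta) + (1-\delta) \log(1/(1-\delta))$ and observe that, because $\log(e/\delta) = \log e + \log(1/\delta)$, the first summand already matches the right-hand side up to the additive term $\delta \log e$. It therefore suffices to prove the one-sided bound $(1-\delta) \log(1/(1-\delta)) \le \delta \log e$. I would obtain this by applying the elementary inequality $\ln(1+t) \le t$ (valid for $t > -1$) with $t = \delta/(1-\delta)$, giving $\ln(1/(1-\delta)) \le \delta/(1-\delta)$. Multiplying both sides by $(1-\delta)$ and converting from natural to base-$2$ logarithms via $\log a = (\log e)\cdot \ln a$ yields exactly the desired bound, and adding it to $\delta \log(1/\delta)$ completes the proof.

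The only mild care needed is at the boundaries: at $\delta = 0$ both sides vanish under the standard convention $0 \log 0 = 0$, while at $\delta = 1$ the left side equals $0$ and the right side equals $\log e > 0$, so the inequality is trivially satisfied. There is no real obstacle here; the fact is essentially a one-line consequence of $\ln(1+t) \le t$, and the only reason to isolate it as a separate statement is that it is the exact form needed to turn \Cref{theorem-qrac-vl} into a clean lower bound when $S(X_J) = \log N$ and $\delta$ is close to $1$.
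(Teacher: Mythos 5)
Your proof is correct. The paper states this as a bare \emph{Fact} with no accompanying proof, so there is no argument of theirs to compare against; your derivation — splitting $H(\delta)$ into $\delta\log(1/\delta) + (1-\delta)\log(1/(1-\delta))$ and bounding the second term by $\delta\log e$ via $\ln(1+t)\le t$ at $t=\delta/(1-\delta)$ — is the standard one-line argument and handles the boundary conventions correctly.
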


\begin{corollary}
\label{theorem-qrac-vl-permutation}
For any QRAC-VL for permutations $S_N$ with $\delta = 1 - k/N$ for any $k = \Omega(1/N)$, we have
\begin{equation*}
L \ge \log N! - O(k \log N).
\end{equation*}
\end{corollary}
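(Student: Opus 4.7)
The plan is to substitute the two relevant entropy quantities for the uniform distribution on permutations into the general lower bound of \Cref{theorem-qrac-vl}, then control the $H(\delta)$ term using the stated fact about the binary entropy function.

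First I would observe that for $X$ the truth table of a uniformly random permutation in $S_N$, we have $S(X) = \log N!$ because $X$ is classical and uniform over the $N!$ permutations. For the per-coordinate entropy, I would note that the marginal distribution of $X_J$ is uniform over $[N]$: conditioning on a uniformly random index $J \in [N]$, the value $X_J$ is the image of a uniformly random input under a uniformly random permutation, hence uniform over $[N]$. Thus $S(X_J) = \log N$.

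Next I would plug $\delta = 1 - k/N$ (so $1 - \delta = k/N$) into \Cref{theorem-qrac-vl}. The ``noise'' term becomes
\begin{equation*}
N \cdot (1-\delta) \cdot S(X_J) \;=\; N \cdot \frac{k}{N} \cdot \log N \;=\; k \log N.
\end{equation*}
For the binary entropy term, using the stated fact $H(1-k/N) = H(k/N) \le (k/N)\log(eN/k)$, I get
\begin{equation*}
N \cdot H(\delta) \;\le\; k \log(eN/k).
\end{equation*}
Under the assumption $k = \Omega(1/N)$, we have $N/k = O(N^2)$ and therefore $\log(eN/k) = O(\log N)$, so $N \cdot H(\delta) = O(k \log N)$. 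Assembling these bounds gives
\begin{equation*}
L \;\ge\; \log N! - k \log N - O(k \log N) \;=\; \log N! - O(k \log N),
\end{equation*}
which is exactly the claimed inequality.

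I do not expect a serious obstacle here: the corollary is essentially a calculation, and the only mildly delicate point is verifying that the assumption $k = \Omega(1/N)$ is enough to absorb the $\log(eN/k)$ factor into $O(\log N)$, which I handled above. Everything else is just substitution into \Cref{theorem-qrac-vl}.
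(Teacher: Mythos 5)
Your proof is correct and follows the same route the paper intends: substitute $S(X)=\log N!$ and $S(X_J)=\log N$ into \Cref{theorem-qrac-vl}, then absorb the $N\cdot H(\delta)$ term into $O(k\log N)$ via the fact $H(k/N)\le (k/N)\log(eN/k)$ together with $k=\Omega(1/N)$. The paper leaves this as an "immediate application," and your write-up just fills in those elementary steps.
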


\section{Proof of \texorpdfstring{\Cref{permutations}}{Theorem for Permutations}}
% The original idea is to use the incompressible argument. Suppose that there is a good inverter, one may use it to encode a random permutation into a short data, the intuition is that a good inverter will help you decode, even when the accessible data is relatively small. 

% Instead of encoding the entire permutation, we merely encode it such that it is only guarantee to decode on one index. This reduce to the problem which we called quantum random access code for permutation (pQRAC), then we may use the lower bound for pQRAC.

%---------------------------
% \subsection{The reduction}

Now we proceed to construct an encoding scheme given an inverter.
Given a permutation inverter $(\alpha, \mathcal A)$ that inverts an $\eps$ fraction of the input.
Let $\eps' = \eps / 2$.
By how we defined success probability, we can show that there exists a large subset $X$ of all the permutations $S_N$ with size at least $\eps' N!$, such that for any permutation $\pi \in X$, we have that
\begin{equation*}
    \Prob{y}[(\alpha, \mathcal A) \textnormal{ inverts $y$ for $\pi$}] \ge \eps'.
\end{equation*}

Consider a permutation $\pi \in X$, and let $I$ be the set of indices $x \in [N]$ such that $\mathcal A$ inverts $f(x)$. Recall that by the definition of $X$, we have $|I| \ge \eps' N$.
We use the shared randomness in the way such that we sample a subset $R \subseteq [N]$ with each element of $[N]$ independently chosen to be in $R$ with probability $\gamma/T^2$, where $\gamma \in (0, 1)$ is some constant that we will decide later.

Let $G$ be a subset of $I$, where an element $x \in G$ if it satisfies the following two conditions,
\begin{enumerate}
    \Item \begin{equation}
        \label{inverter-condition-1}
        x \in R;
    \end{equation}
    \item The total query magnitude on $R \setminus \{x\}$ while running $\mathcal A^{\pi}(\alpha, \pi(x))$ is bounded by $c/T$ for some constant $c$, that is,
    \begin{equation}
        \label{inverter-condition-2}
        \sum_{z\in R\setminus\{x\}}{q_z(x)}\le \frac c T.
    \end{equation}
\end{enumerate}

\begin{claim}
    With probability at least 0.8 over the choice of $R$, $|G| = \Omega(\eps N/T^2)$.
\end{claim}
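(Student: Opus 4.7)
The plan is to show that each $x \in I$ lies in $G$ with probability at least $(1 - o(1))\gamma/T^2$, and then to upgrade this in-expectation statement to a high-probability bound on $|G|$ via Chebyshev and Markov.

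Fix $x \in I$ and consider $\Pr_R[x \in G]$. Condition \eqref{inverter-condition-1} holds with probability $\gamma/T^2$, and conditional on $x \in R$ the membership of every other element $z \ne x$ in $R$ remains independent Bernoulli$(\gamma/T^2)$; crucially, these choices are independent of the query magnitudes $q_z(x)$, which depend only on $\pi, \alpha, \pi(x)$ and not on $R$. Since each of the $T$ quantum queries produced by $\mathcal A^\pi(\alpha,\pi(x))$ is a unit vector, its magnitudes across positions sum to at most $1$, and hence $\sum_{z \in [N]} q_z(x) \le T$. Linearity of expectation then gives $\Epc{R}[\sum_{z \in R \setminus \{x\}} q_z(x) \mid x \in R] \le (\gamma/T^2)\cdot T = \gamma/T$, and Markov's inequality implies that condition \eqref{inverter-condition-2} holds with conditional probability at least $1 - \gamma/c$. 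Hence $\Pr[x \in G] \ge (\gamma/T^2)(1 - \gamma/c)$.

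To lift this to a high-probability bound on $|G|$, I decompose $|G| = |I \cap R| - |B|$, where $B \subseteq I \cap R$ is the set of elements for which condition \eqref{inverter-condition-2} fails. The first term is a sum of $|I| \ge \eps N / 2$ independent Bernoullis with mean $\gamma/T^2$, hence it has mean $\mu_1 := |I|\gamma/T^2$ and variance at most $\mu_1$. Assumptions \eqref{eq-nondegenerate-eps} and \eqref{eq-nondegenerate-T} force $\eps N/T^2 \to \infty$ (since $T^2 = o(\eps^2 N)$ yields $\eps N/T^2 = \omega(1/\eps) = \omega(1)$), so Chebyshev's inequality gives $|I \cap R| \ge \mu_1/2$ with probability at least $0.9$ for all sufficiently large $N$. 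For the second term, $\mathbb{E}[|B|] \le \mu_1 \gamma / c$ by the previous paragraph, and Markov's inequality gives $|B| \le \mu_1/4$ with probability at least $1 - 4\gamma/c$.

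Choosing the constants (e.g.\ $\gamma = 1/8$ and any $c \ge 5$) so that $4\gamma/c \le 0.1$, a union bound produces $|G| \ge \mu_1/2 - \mu_1/4 = \mu_1/4 = \Omega(\eps N / T^2)$ with probability at least $0.8$, which is the claim. I expect the main delicate point to be the independence used in the Markov step for condition \eqref{inverter-condition-2}: one must observe that $R$ is sampled independently of the inverter's execution, so conditioning on $\{x \in R\}$ leaves the remaining membership variables in $R$ independent of the (non-$R$-random) numbers $q_z(x)$. Everything else reduces to routine concentration.
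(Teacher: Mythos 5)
Your proof is correct and follows essentially the same route as the paper's: decompose $|G|$ as $|I \cap R|$ minus the subset where condition~\eqref{inverter-condition-2} fails, concentrate the first term, bound the second in expectation via the independence of conditions~\eqref{inverter-condition-1} and~\eqref{inverter-condition-2}, and finish with Markov plus a union bound. The only cosmetic differences are that you use Chebyshev where the paper invokes the multiplicative Chernoff bound for $|I\cap R|$, and you pin down explicit values of $\gamma$ and $c$ where the paper leaves them as symbolic small constants.
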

\begin{proof}
    Let $H = R \cap I$.
    Due to the definition of $R$, $|H|$ is distributed according to a binomial distribution.
    Therefore, the expected value of $|H|$ is $|I| \gamma/T^2$.
    By the multiplicative Chernoff bound and \eqref{eq-nondegenerate-T},
    \begin{equation}
        \label{eq-size-H}
        \Prob{R}\left[|H| \ge \frac{|I|\gamma}{2T^2}\right] \ge 0.9
    \end{equation}
    for all sufficiently large $N$.

    By definition, each query that $\mathcal A$ makes is of unit length.
    Since $\mathcal A$ makes at most $T$ queries, by \Cref{def-total-query-magnitude},
    \begin{equation*}
        \sum_{z \in [N]} q_z(x) \le T.
    \end{equation*}
    By linearity of expectation,
    \begin{equation*}
        \Epc{R}\left[\sum_{z \in R \setminus \{x\}}q_z(x) \right]
            = \sum_{z \in [N] \setminus \{x\}} \frac \gamma {T^2} q_z(x)
            \le \frac \gamma{T^2} T
            = \frac \gamma T.
    \end{equation*}
    Hence, by Markov's inequality,
    \begin{equation}
        \label{eq-prob-J}
        \Prob{R}\left[\sum_{z \in R \setminus \{x\}}q_z(x) \ge \frac c T \right] \le \frac T c \cdot \frac \gamma T = \frac \gamma c.
    \end{equation}
    Let $J$ denote the subset of $x \in I$ that satisfy \eqref{inverter-condition-1} but not \eqref{inverter-condition-2}.
    Note that \eqref{inverter-condition-1} and \eqref{inverter-condition-2} are independent for each $x \in I$, since \eqref{inverter-condition-1} is whether $x \in R$ and \eqref{inverter-condition-2} only concerns the intersection of $R$ and $[N] \setminus \{x\}$.
    Therefore by \eqref{eq-prob-J}, the probability that $x \in I$ satisfies $x \in J$ is at most $\gamma^2/(cT^2)$.
    Hence, by Markov's inequality,
    \begin{equation}
        \label{eq-size-J}
        \Prob{R}\left[|J| \le \frac{10|I|\gamma^2}{cT^2}\right] \ge 0.9.
    \end{equation}
    From \eqref{eq-size-H} and \eqref{eq-size-J}, we get that with probability at least 0.8 over the choice of $R$,
    \begin{equation*}
        |G| = |H| - |J|
            \ge \frac{|I|\gamma}{2T^2} - \frac{10|I|\gamma^2}{cT^2}
            \ge \frac{\eps'\gamma N}{2T^2} \left(1 - \frac{5 \gamma^2} c \right)
            = \Omega\left(\frac{\eps N}{T^2}\right),
    \end{equation*}
    given that $\gamma$ is a small enough positive constant.
\end{proof}

We now proceed to describe the QRAC-VL scheme for encoding $\pi^{-1}$.
If $\pi \not\in X$ or $|G|$ is smaller than $O(\eps N/T^2)$, the encoding simply sets a (classical) flag (which takes one bit) and stores the entire permutation table of $\pi^{-1}$ (we will denote this as case A).
In this case, it is straightforward to construct a decoder that succeeds with probability 1.

Otherwise assuming $G$ is large enough, we clear the first flag, and proceed with our QRAC-VL that computes (if necessary) and outputs the following information $\beta$ as our encoding: (which we will denote as case B)
\begin{itemize}
    \item The size of $G$, encoded using $\log N$ bits;
    \item The set $G \subseteq R$, encoded using $\log \binom{|R|}{|G|}$ bits;
    \item The permutation $\pi$ restricted to input outside of $G$, encoded using $\log (N!/|G|!)$ bits;
    \item Quantum advice used by the algorithm repeated $\rho$ times $\alpha^{\bigotimes \rho}$, for some $\rho$ that we will decide later.
        (We can compute this as the encoder can preprocess multiple copies of the same advice. Note that this is the only part of our encoding that is not classical.)
\end{itemize}

Upon given the encoding $\beta$, some image $y \in [N]$, and the algorithm's randomness $R$, the decoder first proceeds to recover set $G$ and $\pi(x)$ for every $x \not\in G$.
If the given $y = \pi(x)$ for some $x \not\in G$, the decoder outputs $x = \pi^{-1}(y)$.
Otherwise, the decoder constructs $\pi'$ to be
\begin{equation*}
    \pi'(x) = \begin{cases}
        y, & x \in G; \\
        \pi(x), & x \not\in G.
    \end{cases}
\end{equation*}
Then the decoder extracts $\alpha_1, \alpha_2, ..., \alpha_\rho$, and invokes $\mathcal A^{\pi'}(\alpha_i, y)$ for each $i \in [\rho]$ and outputs their majority vote.
Let $\ket{\phi_\pi}$ and $\ket{\phi_{\pi'}}$ denote the final states of $\mathcal A$ when it is given the oracle $\pi$ and $\pi'$ respectively.
Then by \Cref{swapping-lemma} and the definition of $G$,
\begin{equation*}
    \| \ket{\phi_\pi} - \ket{\phi_{\pi'}} \|
        \le \sqrt{T\sum_{z \in R \setminus \{x\}}{q_z(x)}}
        \le \sqrt{T \cdot \frac c T}
        = \sqrt c.
\end{equation*}
As $x \in I$, by the definition of $I$, measuring $\ket{\phi_\pi}$ gives $x$ with probability at least $2/3$.
Given $c$ is a small enough positive constant, measuring $\ket{\phi_{\pi'}}$ will also give $x$ with probability at least $0.6$.

We now examine the length of our encoding.
With probability $1 - \eps'$, we have $\pi \not\in X$;
    with probability $\eps' \cdot (1 - 0.8)$, we have $\pi \in X$ but $G$ is small.
Therefore, over all, with probability $1 - 0.6 \eps$, our encoding will take case A, where the encoding consists of $1 + \log N!$ classical bits and decoder succeeds with probability 1.

With probability $0.4 \eps$, our encoding takes case B, and the size of the encoding will be
\begin{align*}
    1 + \log N + \log \binom{|R|}{|G|} + \log (N!/|G|!) + \rho S.
\end{align*}
By \eqref{eq-nondegenerate-T}, $\log \binom{|R|}{|G|} = O(|G| \log(|R|/|G|)) = O(|G| \log 1/\eps) = o(|G| \log |G|)$, and we can rewrite the size of the encoding as
\begin{align*}
    \rho S - \log |G|! + \log N! + o(\log |G|!).
\end{align*}
In this case, when the decoder is queried a point inside what she has remembered, that is $y \not\in \pi(G)$ (which occurs with probability $1 - |G|/N$), she recovers the correct pre-image with probability 1;
    otherwise, with one copy of the advice, she recovers the correct pre-image with probability 0.6, therefore with $\rho$ copies, by Chernoff's bound, she recovers the correct pre-image using majority vote, with probability $1 - \exp(-\Omega(\rho))$.

Overall, the average encoding length is at most $1/2 \cdot (\eps \rho S + |G| H(\eps) - \eps \log |G|! + \eps \log N) + \log N!$, and the average success probability is $1 - |G|/N \cdot \exp(-\Omega(\rho))$.
By setting $\rho = \Omega(\log(N/\eps)) = \Omega(\log N)$, the average success probability\footnote{Technically, we proved that the average success probability will be at least this much. However, as the success probability is monotone in encoding length, it is not hard to see that we can still use \Cref{theorem-qrac-vl-permutation}.} will be $1 - O(\eps/N)$.%since \Cref{theorem-qrac-vl-permutation} requires success probability not to be too large (in particular, $< 1 - \Omega(1/N^2)$) due to the bad property of binary entropy function, we can always construct a worse algorithm that achieves exactly this success probability with negligible advice overhead. Therefore, without loss of generality, consider the average success probability to be exactly $1 - C\eps/N$ for some constant $C$.
By \eqref{eq-nondegenerate-eps} and \Cref{theorem-qrac-vl-permutation}, we have
\begin{equation*}
    \log N! + 1/2 \cdot (\eps \log |G|! - \eps \rho S - o(\eps \log |G|!) - \eps \log N) \ge \log N! - O(\log N).
\end{equation*}
Given \eqref{eq-nondegenerate-T}, \eqref{eq-nondegenerate-S}, i.e. $S, T$ satisfy some non-trivial conditions, we can simplify the expression above and obtain
\begin{equation*}
    \log |G|! + o(\log |G|!) \ge \Omega(S \log N).
\end{equation*}
As we are conditioning on the event that $G$ is large, plugging in the lower bound on $|G|$, we obtain that $ST^2 \ge \tilde\Omega(\eps N)$.

\section{Proof of \texorpdfstring{\Cref{functions}}{Theorem for Functions}}
Given a function inverter $(\alpha, \mathcal A)$ that inverts an $\eps$ fraction of the input.
For function $f: [M] \to [N]$, define $f^{-1}(y) = x$ if such $x$ exists, else $\bot$.
Using this notion, we can equivalently view sampling a function $f$ from $F_M$ as sampling an inverse function $f^{-1}$ from all the possible partitions of $[M]$ into $N$ bags, denoted as $P_M$.
Let $X$ sampled from $P_M$ as in \Cref{theorem-qrac-vl}, then $S(X) = M \log N$ and
\begin{align*}
	S(X_J) &= M \left(-\frac 1 N \log \frac 1 N - \left(1 - \frac 1 N\right) \log \left(1 - \frac 1 N\right) \right) \\
		&= \frac M N \left( N \log N - (N - 1) \log (N - 1) \right) \\
		&\le \frac M N (\log N + \log e).
\end{align*}

\begin{corollary}
\label{theorem-qrac-vl-function}
For any QRAC-VL for partitions $P_M$ with $\delta = 1 - \beta$ for any $\beta$, we have
\begin{equation*}
L \ge M \log N - M\beta\left(\log(e/\beta) + \frac{M}{N} (\log N + \log e)\right).
\end{equation*}
\end{corollary}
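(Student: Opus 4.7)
The plan is to obtain this corollary as a direct substitution of the displayed entropy calculations into \Cref{theorem-qrac-vl}, exactly mirroring the way \Cref{theorem-qrac-vl-permutation} was derived in the permutation case.

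First, I would make the identifications explicit. View a uniformly random $f : [M] \to [N]$ as a uniform sample from the labeled-partition set $P_M$, and take $X = (X_1, \dots, X_N)$ with $X_y = f^{-1}(y) \subseteq [M]$, represented by its $M$-bit characteristic vector. Since $X$ carries the same information as the truth table of $f$, we get $S(X) = M\log N$. For the per-index entropy, observe that under a uniform $f$ the $M$ indicators $\mathbf{1}[f(i) = y]$ are mutually independent $\mathrm{Bernoulli}(1/N)$ random variables, so $S(X_y) = M \cdot H(1/N)$ for every $y$, and hence $S(X_J) = M \cdot H(1/N)$. The estimate
\begin{equation*}
    H(1/N) \;=\; \frac{1}{N}\bigl( N \log N - (N-1)\log(N-1) \bigr) \;\le\; \frac{\log N + \log e}{N}
\end{equation*}
(coming from $-(1-x)\ln(1-x) \le x$ at $x = 1/N$) then gives the $S(X_J)$ bound displayed in the excerpt.

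Second, I would plug these two values into \Cref{theorem-qrac-vl} with $\delta = 1 - \beta$, together with the supplied fact $H(1-\beta) = H(\beta) \le \beta \log(e/\beta)$ to replace the binary entropy term. Expanding
\begin{equation*}
    L \;\ge\; S(X) - N \bigl( H(\beta) + \beta \cdot S(X_J) \bigr)
\end{equation*}
and collecting like terms yields the claimed inequality after pulling out the common $\beta$ factor.

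The only point requiring genuine verification is the independence that makes $S(X_J)$ factor as $M \cdot H(1/N)$: this crucially uses that $f$ is a \emph{uniformly random function} (so the $M$ indicators are independent), and is precisely the place where the partition view diverges from the permutation view treated in \Cref{theorem-qrac-vl-permutation} (where the analogous decomposition fails because the coordinates of the partition are tightly correlated). Beyond this observation, the derivation is a one-step substitution into the QRAC-VL lower bound, so I do not expect a genuine obstacle comparable to those already overcome in the proof of \Cref{theorem-qrac-vl} itself.
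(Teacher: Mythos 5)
Your method is exactly the paper's: identify $X = (X_1,\dots,X_N)$ with $X_y = f^{-1}(y)$, compute $S(X) = M\log N$ and $S(X_J) = M\cdot H(1/N) \le \tfrac{M}{N}(\log N + \log e)$, then substitute into \Cref{theorem-qrac-vl} with $\delta = 1-\beta$ and the bound $H(\beta) \le \beta\log(e/\beta)$. All of these intermediate computations are correct.

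However, your last sentence --- that ``collecting like terms yields the claimed inequality'' --- does not hold. \Cref{theorem-qrac-vl} gives
\begin{equation*}
    L \ge S(X) - N\cdot\bigl(H(\beta) + \beta \cdot S(X_J)\bigr),
\end{equation*}
where the outer factor is the \emph{number of coordinates} of $X$, which in the partition view you adopted is $N$ (one coordinate per bin $y\in[N]$), not $M$. Substituting the entropy values gives
\begin{equation*}
    L \ge M\log N - N\beta\log(e/\beta) - M\beta(\log N + \log e)
      = M\log N - N\beta\Bigl(\log(e/\beta) + \tfrac{M}{N}(\log N + \log e)\Bigr),
\end{equation*}
which is the stated corollary with $M\beta$ replaced by $N\beta$. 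These two expressions agree only when $M = N$; for $M < N$ the corollary as printed asserts a \emph{strictly stronger} bound than the substitution produces (less is subtracted), so it cannot be obtained by ``pulling out the common $\beta$ factor.'' This looks like an $M$-for-$N$ slip in the corollary statement; since the corollary is only invoked downstream under the hypothesis $M = O(N)$, the discrepancy is a constant factor there and does not affect \Cref{functions}, but a careful proof of the corollary exactly as written would need an additional argument (or the statement should be corrected to $N\beta$). You should flag that the displayed inequality does not literally follow from the substitution you set up.
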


Now we construct the encoding scheme given the inverter.
Similarly as before, there is a subset $X_1 \subseteq F_M$ of size at least $0.5 \eps \cdot N^M$ such that for each function in $X_1$ the inverter is able to invert at least $\eps/2$ fraction of the input.
Let $X_2$ be functions where there exists an image in the function that has more than $K := \left(\frac{2M}N + 1\right) \cdot C \cdot \log(M/\eps) = \tilde O(1)$ pre-images for some constant $C$.
We claim that $|X_2| \le 0.1 \eps N^M$ (for cases when $M \le N$ and $M > N$, by using multiplicative form of Chernoff bound and union bound on the number of pre-images for each image.
Let $X_3 = X_1 - X_2$ with size at least $0.4 \eps N^M$, that is the set of functions that both have a large amount of invertible points and each image does not have a lot of pre-images.

Consider a function $f \in X_3$, and let $I$ be the set of indices $x \in [M]$ such that $\mathcal A$ when given input $f(x)$ returns exactly $x$ (conditioned on $f$ evaluating on the input is indeed $f(x)$) with the highest probability (ties are broken arbitrarily).
It is not hard to prove that $|I| \ge \frac{\eps M}{2K}$.
We sample a subset $R \subseteq [M]$, with each element independently chosen with probability $\gamma/T^2$ for some constant $\gamma$ that we will decide later.

Let $G \subseteq I$, where $x \in G$ if
\begin{enumerate}
    \Item \begin{equation}
        \label{function-inverter-condition-1}
        x \in R;
    \end{equation}
    \item The total query magnitude on $R \setminus \{x\}$ while running $A^{f}(\alpha, f(x))$ is bounded by $c/T$, that is,
    \begin{equation}
        \label{function-inverter-condition-2}
        \sum_{z\in R \setminus \{x\}}{q_z(x)}\le \frac c T.
    \end{equation}
\end{enumerate}

\begin{claim}
    With probability at least 0.75 over the choice of $R$, $|G| = \Omega\left(\frac{\eps M}{KT^2}\right)$.
\end{claim}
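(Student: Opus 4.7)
The plan is to mirror the argument used in the analogous claim for permutations, replacing $|I|\ge \eps' N$ by the function-case estimate $|I| \ge \eps M/(2K)$. Since $R$ includes each element independently with probability $\gamma/T^2$, all the probabilistic tools carry over with essentially no change; the only subtlety is propagating the new lower bound on $|I|$ through to the final expression, so the answer becomes $\Omega(\eps M/(KT^2))$ instead of $\Omega(\eps N/T^2)$.

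Concretely, I would set $H := R\cap I$ and observe that $|H|$ is binomially distributed with mean $|I|\gamma/T^2$; by a multiplicative Chernoff bound together with the hypothesis \eqref{eq-function-nondegenerate-T} (which ensures the mean is large enough for concentration to kick in), we get $|H|\ge |I|\gamma/(2T^2)$ with probability at least, say, $0.9$. Next, since $\mathcal A$ makes at most $T$ queries and each has unit norm, $\sum_{z\in[M]} q_z(x) \le T$ for every $x$, so linearity of expectation gives $\mathbb{E}_R[\sum_{z\in R\setminus\{x\}} q_z(x)] \le \gamma/T$. Markov then yields the per-element bound $\Pr_R[\sum_{z\in R\setminus\{x\}} q_z(x) \ge c/T] \le \gamma/c$. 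Define $J\subseteq I$ to be those $x\in I$ that satisfy \eqref{function-inverter-condition-1} but fail \eqref{function-inverter-condition-2}; the two events are independent (the first depends only on whether $x\in R$, the second only on $R\cap([M]\setminus\{x\})$), so $\Pr[x\in J] \le \gamma^2/(cT^2)$, and another application of Markov gives $|J| \le 10 |I|\gamma^2/(cT^2)$ with probability at least $0.9$.

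Combining these two high-probability events by a union bound (which costs $0.2$ in probability, yielding at least $0.8 > 0.75$), we obtain
\begin{equation*}
    |G| = |H| - |J| \ge \frac{|I|\gamma}{2T^2} - \frac{10|I|\gamma^2}{cT^2} = \frac{|I|\gamma}{T^2}\left(\frac{1}{2} - \frac{10\gamma}{c}\right),
\end{equation*}
which is $\Omega(|I|/T^2)$ provided $\gamma$ is chosen to be a sufficiently small positive constant (relative to $c$). Plugging in $|I| \ge \eps M/(2K)$ gives the claimed $|G| = \Omega(\eps M/(KT^2))$.

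The step I expect to be the only mild obstacle is verifying that the Chernoff step indeed applies, i.e.\ that the mean $|I|\gamma/T^2 = \Omega(\eps M/(KT^2))$ is $\omega(1)$ so that a multiplicative Chernoff bound actually gives a constant failure probability. This is where the strong upper bound on $T$ in hypothesis \eqref{eq-function-nondegenerate-T} (together with $K = \tilde O(1)$ from the definition of $X_3$ and the lower bound \eqref{eq-function-nondegenerate-eps} on $\eps$) must be used; everything else is a transcription of the permutation-case calculation with the updated lower bound on $|I|$.
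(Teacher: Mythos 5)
Your proposal matches the paper's own proof essentially line for line — the paper itself notes its argument is a transcription of the permutation case with $|I| \ge \eps M/(2K)$ substituted in, and your Chernoff/Markov/union-bound decomposition is exactly what it does. You even correct two small slips present in the paper's version (the domain should be $[M]$ rather than $[N]$ in the query-magnitude sum, and the relevant hypothesis is \eqref{eq-function-nondegenerate-T}, not \eqref{eq-nondegenerate-T}), and your remark about checking that the mean $|I|\gamma/T^2$ is $\omega(1)$ before applying multiplicative Chernoff is the right point to verify.
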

\begin{proof}
	The proof is almost exactly the same as in the case for permutations.

    Let $H = R \cap I$.
    Due to the definition of $R$, $|H|$ is distributed according to a binomial distribution.
    Therefore, the expected value of $|H|$ is $|I| \gamma/T^2$.
    By the multiplicative Chernoff bound and \eqref{eq-nondegenerate-T},
    \begin{equation}
        \label{eq-function-size-H}
        \Prob{R}\left[|H| \ge \frac{|I|\gamma}{2T^2}\right] \ge 0.95
    \end{equation}
    for all sufficiently large $N$.

    By definition, each query that $\mathcal A$ makes is of unit length.
    Since $\mathcal A$ makes at most $T$ queries, by \Cref{def-total-query-magnitude},
    \begin{equation*}
        \sum_{z \in [N]} q_z(x) \le T.
    \end{equation*}
    By linearity of expectation,
    \begin{equation*}
        \Epc{R}\left[\sum_{z\in R \setminus \{x\}}q_z(x) \right]
            = \sum_{z\in [N] \setminus \{x\}} \frac \gamma {T^2} q_z(x)
            \le \frac \gamma{T^2} T
            = \frac \gamma T.
    \end{equation*}
    Hence, by Markov's inequality,
    \begin{equation}
        \label{eq-function-prob-J}
        \Prob{R}\left[\sum_{z\in R \setminus \{x\}}q_z(x) \ge \frac c T \right] \le \frac T c \cdot \frac \gamma T = \frac \gamma c.
    \end{equation}
    Let $J$ denote the subset of $x \in I$ that satisfy \eqref{function-inverter-condition-1} but not \eqref{function-inverter-condition-2}.
    Similarly, here \eqref{function-inverter-condition-1} and \eqref{function-inverter-condition-2} are also independent for each $x \in I$, since \eqref{function-inverter-condition-1} is whether $f(x) \in R$ and \eqref{function-inverter-condition-2} only concerns the intersection of $R$ and $[N] \setminus \{f(x)\}$.
    Therefore by \eqref{eq-function-prob-J}, the probability that $x \in I$ satisfies $x \in J$ is at most $\gamma^2/(cT^2)$.
    Hence, by Markov's inequality,
    \begin{equation}
        \label{eq-function-size-J}
        \Prob{R}\left[|J| \le \frac{10|I|\gamma^2}{cT^2}\right] \ge 0.9.
    \end{equation}
    From \eqref{eq-function-size-H} and \eqref{eq-function-size-J}, we get that with probability at least 0.75 over the choice of $R$,
    \begin{equation*}
        |G| = |H| - |J|
            \ge \frac{|I|\gamma}{2T^2} - \frac{10|I|\gamma^2}{cT^2}
            \ge \frac{\eps\gamma M}{4KT^2} \left(1 - \frac{5 \gamma^2} c \right)
            = \Omega\left(\frac{\eps M}{KT^2}\right),
    \end{equation*}
    given that $\gamma$ is a small enough positive constant.
\end{proof}

We now proceed to describe the QRAC-VL scheme for encoding the partition $f^{-1}$.
If $f \not\in X_3$ or $|G|$ is not at least $\Omega(\eps M/(KT^2))$, the encoding simply sets a (classical) flag (which takes one bit) and stores the entire table of $f^{-1}$ (we will denote this as case A).
In this case, it is straightforward to construct a decoder that succeed with probability 1.

Otherwise assuming $f \in X_3$ and $G$ is large enough, we clear the first flag, and proceed with our QRAC-VL that computes (if necessary) and outputs the following information $\beta$ as our encoding: (which we will denote as case B)
\begin{itemize}
    \item The size of $G$, encoded using $\log(M + N)$ bits;
	\item The set $G \subseteq R$, encoded using $\log \binom{|R|}{|G|}$ bits;
    \item The set $f(G) \subseteq [N]$, encoded using $\log \binom N{|G|}$ bits;
    \item The function $f$ restricted to input outside of $G$, encoded using $(M - |G|) \log N$ bits;
	\item Hash tags $h_1, ..., h_{|G|}$ for each $y \in f(G)$, each of length $\log(K \log N) = \log K + \log \log N$, encoded using $|G| \cdot (\log K + \log \log N)$;
    \item Quantum advice used by the algorithm repeated $\rho$ times $\alpha^{\bigotimes \rho}$, for $\rho = \tilde O(K)$.
\end{itemize}

Upon given the encoding $\beta$, some image $y \in [N]$, and the algorithm's randomness $R$, the decoder first proceeds to recover set $G, f(G)$ and $f(x)$ for every $x \not\in G$.
If the given $y \not\in f(G)$, the decoder outputs $x = f^{-1}(y)$.
Otherwise, the decoder constructs
\begin{equation*}
    f'(x) = \begin{cases}
        y, & x \in G; \\
        f(x), & x \not\in G.
    \end{cases}
\end{equation*}
Then the decoder extracts $\alpha_1, \alpha_2, ..., \alpha_\rho$, and invokes $\mathcal A^{f'}(\alpha_i, y)$ to obtain $\rho$ outputs.
After measuring the outputs, the decoder hashes each output and compares with the hash $h_y$ in the encoding.
Finally, the decoder randomly chooses a output with the correct hash, combining other pre-images in the encoding as the output pre-image set.

Let $\ket{\phi_f}$ and $\ket{\phi_{f'}}$ denote the final states of $\mathcal A$ when it is given the oracle $f$ and $f'$ respectively.
Then by \Cref{swapping-lemma} and the definition of a good element,
\begin{equation*}
    \| \ket{\phi_f} - \ket{\phi_{f'}} \|
        \le \sqrt{T\sum_{z \in R \setminus \{x\}}{q_z(x)}}
        \le \sqrt{T \cdot \frac c T}
        = \sqrt c.
\end{equation*}
As $x \in I$, by the definition of $I$, measuring $\ket{\phi_f}$ gives some pre-image of $y$ that is in $G$ with probability at least $2/3 \cdot 1/K$.
Given $c$ is a small enough positive constant, measuring $\ket{\phi_{f'}}$ will also give $x$ with probability at least $0.6/K$.
Assuming the logarithmics in $\rho = \tilde O(K)$ is large enough, we can find at least one correct output in this process with probability at least $1 - 1/\log N$.
Due to the length of the hash tag and \Cref{thm-hash}, all the incorrect outputs will be discarded with probability $1 - 1/\log N$.
Overall, the success probability of our decoding procedure for a $y \in f(G)$ is at least $1 - 2/\log N$.

We now examine the length of our encoding.
With probability $1 - 0.6 \eps$, we have $f \not\in X_3$;
    with probability $\eps \cdot 0.4 \cdot (1 - 0.75)$, we have $f \in X$ but $G$ is small.
Therefore, over all, with probability $1 - 0.7 \eps$, our encoding will take case A, where the encoding consists of $1 + \log N!$ classical bits and decoder succeeds with probability 1.

With probability $0.3 \eps$, our encoding takes case B, and the size of the encoding will be
\begin{align*}
    1 &+ \log(M + N) + \log \binom{|R|}{|G|} + \log \binom N{|G|} + (M - |G|) \log N \\
      &+ |G| \log(K \log N) + \rho S,
\end{align*}
which is at most
\begin{align*}
    M \log N + |G| \log \frac{O(K^2 \log N)}\eps - |G| \log |G| + \rho S,
\end{align*}
for all sufficiently large $N$.
In this case, when the decoder is queried a point inside what she has remembered, that is $y \not\in \pi(G)$ (which occurs with probability $1 - |G|/N$), she recovers the correct pre-image with probability 1; otherwise, she recovers the correct pre-image with probability at least $1 - 2/\log N$.

Overall, the average success probability is at least $1 - 0.15\eps|G|/(N \log N) \le 1 - \Omega(1/N^{10})$.
By \Cref{theorem-qrac-vl-function} and $M/N + 1 = \Theta(1)$ by \eqref{eq-not-too-many-preimages}, we have
\begin{align*}
    &\mathrel{\phantom{=}} 0.3 \eps \cdot \left(|G| \log \frac{O(K^2 \log N)}\eps - |G| \log |G| + \rho S\right) \\
      &\ge - 0.15 \frac{\eps |G|M}{N \log N} \cdot O(\log N).
\end{align*}
Using the fact that \eqref{eq-function-nondegenerate-T}, \eqref{eq-function-nondegenerate-S}, we can ignore the lower order terms and obtain
\begin{equation*}
    \tilde O(|G|) \ge \tilde \Omega(SK).
\end{equation*}
Thus, $ST^2 \ge \tilde\Omega(\eps M)$.

\section{Open Questions}

Our work still does not answer whether there exists a tighter asymptotic lower bound like $ST + T^2 \ge \eps N$, nor whether there exists an attack using quantum advice that achieves $ST^2 = \eps N$.

On the other hand, it seems hard to generalize our techniques to handle random functions where $M \gg N$.
Say $M = N^2$.
It turns out that for whatever choice of $G \subseteq R$, remembering where $G$ is, and $f$ for points outside of $G$ is already too much (requires number of bits greater than $M \log N$).
Recall that $|R| \propto M/T^2$, but if we only remember one pre-image per image, $|G| \le N$.
Therefore under these parameters, $\log \binom{|R|}{|G|} \ge |G| \log N > |G| \log |G|$ and we will lose the non-trivial savings we get from the reduction.
Therefore, a natural direction would be to prove any meaningful lower bound for random function inversion under the regime where $M \gg N$.

\iflipics\else
\section*{Acknowledgements}

The authors would like to thank Nai-Hui Chia, Luca Trevisan, Xiaodi Wu, and Penghui Yao for their helpful insights during the discussions. We also thank the anonymous reviewers at QIP and ITC for pointing out various issues in the paper.
\fi

% \newpage 
\bibliography{refs}
\end{document}
%%%%%%%%%%%%%%%%%%%%%%%%%%%%%%%%%%%%%%%%%%%%%%%%%%%%%